\definecolor{blue}{RGB}{170, 170, 255}
\definecolor{lightblue}{RGB}{255,170,170}
\definecolor{orange}{RGB}{170, 255, 170}
\definecolor{green}{RGB}{255, 255, 120}
\newcommand{\Even}{\mathrm{Even}}
\newcommand{\Odd}{\mathrm{Odd}}
\newcommand{\Gc}{\mathcal{G}}
\newcommand{\Tc}{\mathcal{T}}
\newcommand{\Attr}[3]{\mathrm{Attr}_{#1}^{#2} \left( #3 \right)}
\newcommand{\G}{\mathcal G}
\renewcommand{\succ}{\mathrm{succ}}
\renewcommand{\root}{\mathrm{root}}
\newcommand{\graph}[3]{\mathrm{G}_{#1}^{(#2)}( #3 )}
\newcommand{\lazy}[1]{\mathcal{L}\left(#1 \right)}
\newcommand{\rk}[2]{\text{rank}_{#1}(#2)}
\DeclareMathOperator{\before}{before}
\DeclareMathOperator{\after}{after}
\DeclareMathOperator{\level}{level}
\DeclareMathOperator{\scope}{Scope}
\DeclareMathOperator{\subtree}{Subtree}
\DeclareMathOperator{\subtreem}{Subtree^{--}}
\newcommand{\Aset}{A}
\newcommand{\Bset}{B}
\newcommand{\Gset}{G}
\newcommand{\dest}[2]{d_{#1}(#2)}
\newcommand{\edge}[1]{E\left( #1 \right)}
\newcommand{\lazi}[1]{\mathcal{L} \left( #1 \right)}
\newcommand{\out}{\mathrm{out}}
\newcommand{\inbis}{\mathrm{in}}
\newcommand{\rec}{\mathrm{rec}}
\newcommand{\indexx}{\mathrm{index}}
\renewcommand{\P}{P}
\newcommand{\Q}{Q}
\newcommand{\R}{R}
\newcommand{\ES}{\normalfont{\texttt{EmptyScope}}}
\newcommand{\invim}[2]{\left( #1 \right)^{-1}\left( #2 \right)}
\newcommand{\invimnp}[2]{#1^{-1}\left( #2 \right)}
\begin{document}
\title{A symmetric attractor-decomposition lifting algorithm for parity games\thanks{This work has been supported by the EPSRC grant EP/P020992/1 (SolvingParity Games in Theory and Practice).}}
%
%
\author{Marcin Jurdzi\'nski\inst{1} \and
R\'emi Morvan\inst{2} \and
Pierre Ohlmann\inst{3} \and K.~S. Thejaswini\inst{1}}
\authorrunning{Jurdzi\'nski et al.}
%
\institute{Department of Computer Science, 	University of Warwick 
\\ \and ENS Paris-Saclay \\ \and 
IRIF, Universit\'e de Paris
}
\maketitle              
\begin{abstract}

Progress-measure lifting algorithms for solving parity games have the best worst-case asymptotic runtime, but are limited by their asymmetric nature, and known from the work of Czerwi\'nski et al. (2018) to be subject to a matching quasi-polynomial lower bound inherited from the combinatorics of universal trees.

Parys (2019) has developed an ingenious quasi-polynomial McNaughton-Zielonka-style algorithm, and Lehtinen et al. (2019) have improved its worst-case runtime. Jurdzi\'nski and Morvan (2020) have recently brought forward a generic attractor-based algorithm, formalizing a second class of quasi-polynomial solutions to solving parity games, which have runtime quadratic in the size of universal trees.

First, we adapt the framework of iterative lifting algorithms to computing attractor-based strategies. Second, we design a symmetric lifting algorithm in this setting, in which two lifting iterations, one for each player, accelerate each other in a recursive fashion. The symmetric algorithm performs at least as well as progress-measure liftings in the worst-case, whilst bypassing their inherent asymmetric limitation. Thirdly, we argue that the behaviour of the generic attractor-based algorithm of Jurdzinski and Morvan (2020) can be reproduced by a specific deceleration of our symmetric lifting algorithm, in which some of the information collected by the algorithm is repeatedly discarded.  This yields a novel interpretation of McNaughton-Zielonka-style algorithms as progress-measure lifting iterations (with deliberate set-backs), further strengthening the ties between all known quasi-polynomial algorithms to date.

\keywords{Parity games \and Universal trees \and Complexity.}
\end{abstract}

\subsubsection*{Context}

Parity games are two-player games on graphs, which have been studied since early 1990's~\cite{EJ91,EJS93} and have many applications in automata theory on infinite trees~\cite{GTW01}, fixpoint logics~\cite{BKMP19,HS19}, verification and synthesis~\cite{LMS19}. 
They are intimately linked to the problems of emptiness and complementation of non-deterministic automata on trees~\cite{EJ91,Zie98}, model checking~\cite{EJS93,HSC16,BW18} and satisfiability checking of fixpoint logics, or fair simulation relations~\cite{EWS05}.

Determining the winner of a parity games are one of the few problems known to lie in the complexity class NP $\cap$ co-NP but not known to have a polynomial algorithm. Existence of a polynomial algorithm for solving parity games, which has been an important open problem for nearly three decades~\cite{EJS93}, has recently gained a lot of attention after the major breakthrough of Calude, Jain, Khoussainov, Li and Stephan~\cite{CJKLS17}, who were the first to provide a quasi-polynomial solution.

\subsubsection*{State-of-the-art.}

Quasi-polynomial solutions to parity games can be classified into two broad family of algorithms: progress measure lifting algorithms~\cite{JL17,FJKSSW19,DJT20} and attractor-based algorithms~\cite{Par19,LSW19,JM20}.

Progress measure lifting algorithms for solving parity games were initially introduced by Jurdzi\'nski~\cite{Jur00-}, and a first quasi-polynomial improvement was discovered by Jurdzi\'nski and Lazi\'c~\cite{JL17}, shortly after the quasi-polynomial breakthrough by Calude et al.~\cite{CJKLS17}. Czerwi\'nski, Daviaud, Fijalkow, Jurdzi\'nski, Lazi\'c, and
Pa\-rys~\cite{CDFJLP19} devoloped the combinatorial notion of universal trees; precisely identifying a structure sufficient for running progress measure lifting algorithms. They showed that
almost all quasi-polynomial algorithms to date, specifically those of Calude et al.~\cite{CJKLS17}, Jurdzi\'nski and Lazi\'c~\cite{JL17} and by Lehtinen~\cite{Leh18}, can be described in this framework, with different constructions of universal trees. Moreover, Czerwi\'nski et al.~\cite{CDFJLP19} also provided a lower bound which matches the succinct construction of a universal tree inherent in the work of Jurdzi\'nski and Lazi\'c~\cite{JL17} and highlighted by Fijalkow~\cite{Fij18}. It is worth noting that although progress measure lifting algorithms achieve the best worst-case complexity, they only focus on constructing a strategy for one of the players, hence suffer from the limitation of being oblivious to the other player's point of view: many parity games (including very trivial examples) induce worst case complexity. This asymmetry makes progress measure lifting algorithms less attractive than attractor-based algorithms for practical applications.

Attractor-based algorithms date back to the classic McNaughton-Zielonka algorithm \cite{McN93,Zie98}, which consistently outperforms other algorithms in practice while having exponential runtime in the worst-case~\cite{Fri11r}. The recent celebrated work of Parys~\cite{Par19} adapts the McNaughton-Zielonka algorithm to reduce its worst-case complexity to quasi-polynomial, and Lehtinen, Shewe and Wojtczak~\cite{LSW19} show how to further refine this technique, lowering the complexity to the square of the state-of-the-art bounds achieved by progress measure algorithms. Although this family of algorithms is not formally subject to the lower bound from universal trees, it is remarked in \cite{LSW19} that a universal tree underlies the structure of their algorithm. Jurdzi\'nski and Morvan~\cite{JM20} have recently proposed a unifying framework for attractor-based algorithms, in the form of a generic algorithm, whose special instances---obtained by using specific families of universal trees---coincide with variants of all the other algorithms. Their algorithm, the \emph{universal attractor algorithm} is a building block for this work, and a pseudo-code is given in the Appendix for convenience (Algorithm~\ref{alg:uad}).

It is well known~\cite{DJL18,JPZ08,McN93,Zie98} that one may extract an attractor-based strategy from an execution of the McNaughton-Zielonka algorithm. In fact, we believe that the overwhelming efficiency of the McNaughton-Zielonka algorithm~\cite{vDij18} might be explained by many parity games having succinct attractor-based strategies, which is one of the motivations for this work. However, the improved quasi-polynomial versions of Parys~\cite{Par19}, Lehtinen et al.,~\cite{LSW19} and Jurdzi\'nski and Morvan~\cite{JM20} do not build strategies, which appears to be an intrinsic limitation to this approach: there is no strong (local) guarantee on subsets output by recursive calls.

\subsubsection*{Contributions and outline.}

Section~\ref{sec:asymmetric} extends the framework of progress measure lifting algorithms to produce attractor-based strategies, generalizing attractor decompositions from~\cite{JM20}. This yields a generic asymmetric attractor decomposition lifting algorithm, Algorithm~\ref{alg:ad-lifting}, which computes the smallest attractor-based strategy in a given tree $\Tc$, in state-of-the-art runtime proportional to the size of $\Tc$. We also provide a construction of a linear graph, in the vocabulary introduced by Colcombet and Fijalkow~\cite{CF19,FGO20}, which formally specifies asymmetric attractor decomposition lifting as a form of generic progress measure lifting.
    
In Section~\ref{sec:symmetric}, we introduce a novel technique which consists of a specific way of running a lifting algorithm for each player in parallel, where each may accelerate the other. This provides a symmetric algorithm, Algorithm~\ref{alg:sym-main}, which matches state-of-the-art complexity in the worst case, while by-passing the limitations of asymmetric iterative algorithms.

Finally, in Section~\ref{sec:attractor-based}, we specify how to decelerate Algorithm~\ref{alg:sym-main} with repetitive loss of information in order to precisely capture the universal attractor decomposition. This gives a first complete formalization of attractor-based algorithms as progress measure lifting iterations, decisively unifying all quasi-polynomial techniques for solving parity games, while suggesting superiority of Algorithm~\ref{alg:sym-main}. As a by-product of the novel techniques that we develop for reasoning about attractor-based algorithms, we obtain an alternative---and more constructive---proof of the main result in \cite{JM20}: the dominion separation theorem, which is a generalization of the key property underlying Parys's breakthrough result.

\section{Preliminaries}\label{sec:prelim}

\subsubsection*{Parity games.}

We provide standard definitions related to parity games.

\begin{definition}
A \emph{parity game} $\Gc$ consists of a finite directed graph $(V, E)$ with no sink, a partition $(V_{\Even}, V_{\Odd})$ of the set of vertices~$V$, and a function $\pi : V \to \{1, \dots, d\}$ that labels every vertex~$v \in V$ with a positive integer~$\pi(v)$ called its \emph{priority}. The size $|\Gc|$ of $\Gc$ is its number of vertices.
\end{definition}

We now define positional strategies, which are also sometimes called memoryless strategies.

\begin{definition}
A \emph{positional Even strategy} is a set $\sigma \subseteq E$
of edges such that:
\begin{itemize}
\item for every $v \in V_{\Even}$, there is an edge $(v, u) \in \sigma$,
\item for every $v \in V_{\Odd}$, if $(v, u) \in E$ then $(v, u) \in \sigma$.
\end{itemize}
Positional Odd strategies are defined by inverting the roles of Even and Odd.
\end{definition}

We may now define what it means to positonally win a parity game from a given vertex.

\begin{definition}
We say that $v$ is \emph{positionaly winning for Even} (resp. for Odd) in $\G$ if there is a positional Even (resp. Odd) strategy $\sigma$, such that any infinite path from $v$ in the subgraph $(V,\sigma)$ has an even (resp. odd) number as maximal priority it visits infinitely often. In this case, we say that $\sigma$ wins from $v$.
\end{definition}

The positional determinacy theorem states that positional strategies are enough to characterise winning sets of a parity game.

\begin{theorem}[Positional determinacy~\cite{EJ91}]
\label{thm:positional-determinacy}
  Every vertex of a parity game $\Gc$ is positionaly winning for one of the players. Moreover, there exists a maximal Even (resp. Odd) positional strategy $\sigma$ which wins from all vertices which are positionaly winning for Even (resp. Odd). 
\end{theorem}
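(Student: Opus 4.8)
The plan is to proceed by induction on the number of vertices $|\Gc|$, and within that on the largest priority $d$, following the classical McNaughton--Zielonka argument. The base case $|\Gc| = 0$ is vacuous. For the inductive step, let $p = d$ be the largest priority occurring in $\Gc$ and assume, without loss of generality, that $p$ is even (the odd case being symmetric by swapping the two players). Let $P = \pi^{-1}(p)$ be the set of vertices of priority $p$. First I would form the Even-attractor $A = \Attr{\Even}{}{P}$ of this set: the set of vertices from which Even can force a visit to $P$ in finitely many steps, together with the associated attractor strategy on $A \setminus P$. The complement $\Gc \setminus A$ is again a parity game with strictly fewer vertices (it is nonempty only if $A \neq V$), so the induction hypothesis applies to it.

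The key step is the standard fixpoint iteration. In $\Gc \setminus A$ consider the set $W_{\Odd}'$ of vertices positionally winning for Odd, with a maximal Odd strategy $\tau'$ witnessing this, obtained from the induction hypothesis. If $W_{\Odd}' = \emptyset$, one argues that Even wins everywhere in $\Gc$: Even plays the attractor strategy to reach $P$ on $A$, plays her winning strategy from $\Gc \setminus A$ elsewhere, and any infinite play either stays eventually in $\Gc \setminus A$ (so is won by Even by construction) or visits $A$, hence $P$, infinitely often, in which case the maximal priority seen infinitely often is $p$, which is even. Otherwise $W_{\Odd}' \neq \emptyset$; then I would take $B = \Attr{\Odd}{}{W_{\Odd}'}$ computed in the \emph{whole} game $\Gc$. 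Since $W_{\Odd}'$ is nonempty and contained in $\Gc \setminus A$ (so it avoids $P$), the set $B$ is a nonempty set from which Odd wins — combining the attractor strategy on $B$ with $\tau'$ — and hence $\Gc \setminus B$ has strictly fewer vertices. Applying the induction hypothesis to $\Gc \setminus B$ yields winning sets and maximal positional strategies there; one then checks that no vertex of $\Gc \setminus B$ is won by Odd in $\Gc$ (a vertex won by Odd in $\Gc \setminus B$ is also won by Odd in $\Gc$, and an Odd-won vertex of $\Gc$ lying outside $B$ would, by a short argument, have to be absorbed into $B$ — contradiction), so the Odd winning region of $\Gc$ is exactly $B$ and the Even winning region is exactly the Even winning region of $\Gc \setminus B$, which is all of $\Gc \setminus B$.

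To obtain the \emph{maximal} strategies claimed in the statement, I would assemble them explicitly: the maximal Odd strategy on $\Gc$ is the union of the attractor strategy on $B \setminus W_{\Odd}'$, the strategy $\tau'$ on $W_{\Odd}'$, and — crucially — on every Odd vertex we additionally throw in \emph{all} outgoing edges that stay within the relevant region whenever that does not compromise winning; more simply, one notes that the union of any two Odd strategies winning from a set $X$ is again an Odd strategy winning from $X$, so a maximal winning strategy exists by taking the union over all winning ones, and the inductive construction exhibits one such. The symmetric construction gives the maximal Even strategy on the Even region.

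The main obstacle is the verification that the strategies patched together across the sets $A$, $B$, and the recursively-handled subgames are genuinely winning — in particular, controlling infinite plays that oscillate between a subgame and its attractor. The heart of the matter is the observation that once a play leaves $\Gc \setminus A$, Even's attractor strategy guarantees a visit to a vertex of the top (even) priority $p$, so any play visiting $A$ infinitely often has $p$ as its maximal infinitely-recurring priority; and dually, a play that from some point on remains in $\Gc \setminus A$ is governed entirely by the inductively-obtained strategy there. Making these "trapping" and "attraction" arguments precise, and checking that the attractor strategies and the recursive strategies have disjoint domains so their union is well-defined, is the only place where care is genuinely required; everything else is bookkeeping.
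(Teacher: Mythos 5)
The paper does not prove this theorem; it cites it from the literature, so your proposal stands on its own. The McNaughton--Zielonka scheme you choose is the right one, but the second branch of your recursion contains a genuine error. After setting $B = \Attr{\Odd}{\Gc}{W'_{\Odd}}$ you claim that no vertex of $\Gc \setminus B$ is won by Odd in $\Gc$, so that the Odd winning region is exactly $B$ and all of $\Gc \setminus B$ is won by Even. This is false, and there is no ``short argument'' absorbing an Odd-won vertex outside $B$ into $B$. Concretely, take three vertices: $w$ of priority $1$, Odd-owned, with only the edge $w \to w$; $a$ of priority $2$, Odd-owned, with only the edge $a \to w$; and $z$ of priority $1$, Even-owned, with edges $z \to a$ and $z \to z$. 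Then $A = \Attr{\Even}{\Gc}{\{a\}} = \{a,z\}$, the subgame on $\{w\}$ is won by Odd, so $W'_\Odd = \{w\}$ and $B = \Attr{\Odd}{\Gc}{\{w\}} = \{w,a\}$ (note $z \notin B$ because of its self-loop). Yet Odd wins from $z$ in $\Gc$: if Even loops at $z$ forever the top recurring priority is $1$, and if Even ever moves to $a$ then Odd sends the play to $w$ and keeps it there. The correct recursion returns $W_{\Odd} = B \cup W''_{\Odd}$, where $W''_{\Odd}$ is Odd's winning region in the recursive call on $\Gc \setminus B$, and $W''_{\Odd}$ can be nonempty --- the possible need for many such rounds is precisely what makes McNaughton--Zielonka exponential in the worst case. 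With this correction your outer induction on $|\Gc|$ still terminates, since $B \supseteq W'_{\Odd} \neq \varnothing$, and the Even region is then $W''_{\Even}$ rather than all of $\Gc \setminus B$.

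A secondary flaw: the claim that the union of any two Odd strategies winning from a set $X$ is again winning from $X$ is false under the paper's strategies-as-edge-sets convention, because the union admits plays that interleave the two strategies and such mixed plays can violate the parity condition. The uniform (``maximal'') strategy must therefore come from your explicit patched construction, whose pieces have pairwise disjoint domains --- which, as you correctly note, is where the real work of verifying the trap and attractor arguments lies; that part of your outline is accurate.
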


As justified by the previous Theorem, we will now simply write ``$v$ is winning for Even'', rather than ``$v$ is positionaly winning for Even''.

\subsubsection*{Reachability strategies and attractors.}

In a parity game~$\Gc$, for a target set of vertices~$B$ and a set of vertices~$A$ such that $B \subseteq A$, an Even strategy~$\sigma$ is an \emph{Even reachability strategy to $B$ from~$A$} if every infinite path in the subgraph $(V, \sigma)$ that starts from a vertex in~$A$ contains at least one vertex in~$B$. The following definition is standard.

\begin{definition}
Let $B \subseteq V$. We call \emph{Even attractor to $B$ in $\Gc$}, and denote $\Attr{\Even}{\Gc}{B}$, the largest set (with respect to inclusion) from which there is an Even reachability strategy to $B$ in $\Gc$.
\end{definition}

We will need to introduce the refined concept of attracting to a target set $B \subseteq V$ through a safe set $C \supseteq B$. We say that an Even reachability strategy $\sigma$ to $B$ from $A \subseteq C$ stays in $C$ if all infinite path in the subgraph $(V, \sigma)$ remain in $C$ (at least) until reaching a vertex in $B$.

\begin{definition}
Let $B \subseteq C \subseteq V$. We call \emph{Even-attractor to $B$ through $C$} the largest subset of $C$ from which there is an Even reachability strategy to $B$ in $\Gc$ which stays in $C$.
\end{definition}

Odd reachability strategies, and Odd attractors are defined symmetrically.

\subsubsection*{Even and Odd trees and universal trees.}

Throughout the paper, $d$ will refer to a fixed even integer which is an upper bound on the priority of vertices in the parity $\Gc$. We now define Even and Odd trees, which are rooted ordered trees of height $d/2$, which we equip for convenience with Even and Odd levels, respectively. To avoid confusion with vertices from $\Gc$, we will always use the terminology ``node'' to refer to elements of trees.

\begin{definition}
An \emph{Even tree} $\Tc^\Even$ of height $d/2$ is a directed acyclic graph equiped with a map $\level: \Tc^\Even \to \{0, 2, \dots, d\}$ such that
\begin{itemize}
\item each node but one has in-degree 1, the unique node of in-degree 0 is called the \emph{root} of $\Tc^\Even$, it has level $d$ and we denote it by $\root^\Even$,
\item the nodes of out-degree $0$ are called \emph{leaves} and have level 0, and nodes of positive out-degree are called inner nodes,
\item if $(n,n')$ is an arc in $\Tc^\Even$, then $n'$ is said to be a \emph{child} of $n$, and it has level $\level(n) -2$; each inner node $n$ is equiped with a linear order $\leq_n$ over its children.
\end{itemize}
A node $n' \neq n$ which is reachable from $n$ is called a \emph{descendant} of $n$. We define a linear order\footnote{This corresponds to the standard depth-first order in an ordered tree.} $\leq$ over $\Tc^\Even$ which is the unique linear order that satisfies $n_1 < n'_1 < n_2$, for all triplets of nodes such that $n'_1$ is a descendant of $n_1$, and $n_1 \leq_n n_2$ are two. An Odd tree $\Tc^\Odd$ is defined similarly, but with Odd levels $\{1, \dots, d+1\}$. In particular, $\root^\Odd$ has level $d+1$ and all leaves of $\Tc^\Odd$ have level $1$.
\end{definition}

An Even tree is depicted in Figure~\ref{fig:tree}. We now move towards introducing universal trees, for which we need the concept of tree inclusions.

\begin{definition}
Let $\Tc_1$ and $\Tc_2$ be two Even or Odd trees of height $d/2$, and $\phi: \Tc_1 \to \Tc_2$. We say that $\phi$ is an \emph{inclusion} of $\Tc_1$ into $\Tc_2$ if $\phi$ is injective, and $\phi$ respects the ordered-tree structure, that is,  if $n \in \Tc_1$ is an inner node and $n' \leq_n n''$ are two children of $n$, then $\phi(n')$ and $\phi(n'')$ are children of $\phi(n)$ such that $\phi(n') \leq_{\phi(n)} \phi(n'')$.

Inclusions of Odd trees are defined analogously.
\end{definition}

\begin{remark}
Note that a tree inclusion always preserves levels.
\end{remark}

This allows us to define universal trees, following~\cite{CDFJLP19}.

\begin{definition}
Let $\Tc$ be an Even or an Odd tree of height $d/2$, and $n$ an integer parameter. We say that $\Tc$ is $n$-universal if all trees with $\leq n$ leaves are included in $\Tc$.
\end{definition}

\section{Asymmetric attractor decomposition lifting algorithms}\label{sec:asymmetric}

In this section, we adapt the formalism of progress measure lifting algorithm to structures capable of computing attractors-based strategies, such as those output by the McNaughton-Zielonka algorithm and described in~\cite{JM20} as attractor decompositions. To this end, we shall introduce so-called \emph{embeded} attractor decompositions. These are similar to progress measures in that they are labellings of $V$ by positions in a structured set which satisfy local validity conditions. However, the structure that we use, a precise enrichment of trees with additional positions, inspired by the so-called lazifications of~\cite{DJL19} and~\cite{DJT20}, is tailored to capture attractor-based strategies.

Subsection~\ref{subsec:labellings}, introduces the structure that allows to define embedded attractor decompositions. Subsection~\ref{subsec:lifting} then adapts progress measure lifting algorithms to the attractor-based setting, by introducing attractor decomposition lifting algorithms. Finally, Subsection~\ref{subsec:pm}, explains that the vocabulary of universal graphs from~\cite{CF19} allows to formally capture attractor decomposition liftings as generic progress measure liftings as in~\cite{FGO20}. This is independant from the remainder of the paper (Sections~\ref{sec:symmetric} and~\ref{sec:attractor-based}).

\subsection{Labellings and embedded attractor decompositions}\label{subsec:labellings}

\subsubsection*{Lazification of an Even or Odd tree.}

Let $\Tc$ be an Even or an Odd tree of height $d/2$. We now introduce an ordered set $\lazi{\Tc} \supseteq \Tc$ which enriches the structure of the tree $\Tc$ by adding additional positions between nodes. This will allow us to define attractor-based strategies as labellings of $V$ by elements of $\lazi{\Tc}$.

To avoid confusion, we use the word ``positions'' to refer to elements of $\lazi{\Tc}$. In particular, a node $n \in \Tc$ is also a position in $\lazi{\Tc}$. Figure~\ref{fig:tree} illustrates an Even tree $\Tc^\Even$, and its lazification $\lazi{\Tc^\Even}$. 

Let $\Tc$ be an Even or Odd tree. We define $\lazi{\Tc} \supseteq \Tc$ by adding to $\Tc$ new positions, $\before(n)$ and $\after(n)$ for all each node $n \in \Tc$, and identifying $\after(n'')$ and $\before(n')$ for all pairs $(n',n'')$ of consecutive children of some inner node $n$.

We define levels for positions in $\lazi{\Tc}$ by extending $\level$ over $\Tc$, and having $\level(\after(n))= \level(\before(n))=\level(n)+1$. The elements of $\lazi{\Tc}$ are linearly ordered by the unique linear order $\leq$ which extends the order over $\Tc$ and satisfies, for all nodes $n \in \Tc$ with children $n', \before(n) < n < \before(n') < n' < \after(n') < \after(n)$.

\begin{figure}[h]
  \makebox[\textwidth][c]{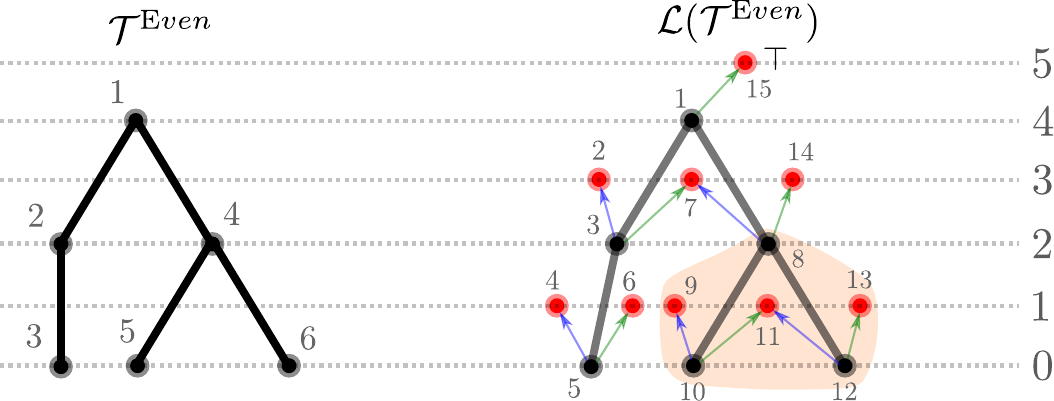}%
\caption{An Even tree $\Tc^\Even$ of height $2$ ($d=4$), and its lazified version $\lazi{\Tc^\Even}$. The local orders over children of inner nodes in $\Tc^\Even$ are depicted left-to-right, and the obtained linear orders on $\Tc^\Even$ and $\lazi{\Tc^\Even}$ are indicated using integers. Regular positions in $\lazi{\Tc^\Even}$ are depicted in black, and lazy positions are the red ones. The blue and green arrows respectively represent the $\before$ and $\after$ maps. Finally, $\subtree(n)$, where $n$ is the 8-th position in $\lazi{\Tc^\Even}$ corresponds to the orange zone. \label{fig:tree}}
\end{figure}

We use $\top$ to refer to $\after(\root)$, and moreover, we supress the position $\before(\root)$ for convenience.

We will refer to positions of $\lazi{\Tc} \setminus \Tc$ as \emph{lazy}, and positions of $\Tc \subseteq \lazi{\Tc}$ as \emph{regular}. If $n \in \Tc$ is a regular position, we use $\subtree(n)$ to denote the subset of $\lazi{\Tc}$ which consists of all positions between $n$ (included) and $\after(n)$ (excluded), and $\subtreem(n)$ to denote $\subtree(n) \setminus \{n\}$.

\subsubsection*{Labellings and validity.}

We are now ready to define Even and Odd labellings, which are the fundamental we work with.

\begin{definition}\label{def:labelling}
We say that a map $\mu$ from $V$ to $\lazy{\Tc}$ is a \emph{labelling} if it satisfies the following: 
\begin{itemize}
    \item for all regular positions $n \in \lazy{\Tc}$ all vertices mapped to $n$ have priority $\level(n)$, 
    \item for all lazy positions $\ell \in \lazy{\Tc} \setminus \Tc$, all vertices mapped to $\ell$ have priority $\leq \level(\ell)$,
\end{itemize}
If $\Tc$ is and Even tree (resp. an Odd tree), we will refer to such $\mu$ as an Even (resp. Odd) labelling.
\end{definition}

\begin{remark}
Note that vertices mapped to $\subtree(n)$ must have priority $\leq \level(n)$, and vertices mapped to $\subtreem(n)$ must have priority $< \level(n)$.
\end{remark}

We now introduce some useful notations. Given an position $p \in \lazy{\Tc}$, we call $\succ(p)$ its immediate successor with respect to the order over $\lazy{\Tc}$. Note that if $n \in \Tc$ is an inner node and $n_1$ is its smallest child, it holds that $\succ(n)=\before(n_1)$. If $\mu$ is a labelling and $p \in \lazi{\Tc}$, we also use the notation $\mu^{-1}(\leq p)$ (resp. $\mu^{-1}(< p)$) to refer to the set of vertices which $\mu$ maps to positions $\leq p$ (resp. $< p$). 

We turn to the concepts of validity in a labelling $\mu$. These are a collection of local conditions that together ensure that $\mu$ describes a winning attractor-based strategy over all vertices that are not mapped to $\top$.

We start by defining validity of an edge in a given Even or Odd labelling.

\begin{definition}
Let $e =(u,v) \in E$ be an edge in $\Gc$ , and $\mu : V \to \lazi{\Tc}$ be an Even (resp. Odd) labelling. We say that $e$ is valid in $\mu$ if
\begin{itemize}
\item $\mu(u)$ is regular and $\mu(v) < \after(\mu(u))$,
\item $\mu(u)=p$ is lazy and $e$ is part an Even (resp. Odd) reachability strategy to $\mu^{-1}(< p)$ which stays in $\mu^{-1}(\leq p)$, or
\item $\mu(u) = \top$.
\end{itemize}\label{def:validEdge}
\end{definition}

\begin{remark}
The second condition is equivalent to ``$\mu(v) < \mu(u)$ or $\mu(v)=\mu(u)$ and Even (resp. Odd) can ensure to reach $\mu^{-1}(< \mu(u))$ from $v$ while remaining in $\mu^{-1}(\leq \mu(u))$''. 
\end{remark}

We now define validity of a vertex.

\begin{definition}
Let $v \in V$ and $\mu : V \to \lazi{\Tc}$ be an Even (resp. Odd) labelling. We say that $v$ is valid in $\mu$ if either 
\begin{itemize}
    \item $v$ is an Even (resp. Odd) vertex and it has an outgoing valid edge, or
    \item $v$ is an Odd (resp. Even) vertex and all of its outgoing edges are valid.
\end{itemize}
\end{definition}

Given an Even (resp. Odd) labelling $\mu$, the set $\sigma_\mu \subseteq E$ of valid edges in $\mu$ defines a positional Even (resp. Odd) strategy in $\Gc$ if all vertices are valid in $\mu$.

\begin{lemma}\label{lem:ad-is-win}
Let $\mu : V \to \lazi{\Tc}$ be an Even (resp. Odd) labelling, and assume that all vertices are valid in $\mu$. Then $\sigma_\mu$ wins from all vertices which are not mapped to $\top$.
\end{lemma}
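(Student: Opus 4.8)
The plan is to show that every infinite play consistent with $\sigma_\mu$, starting from a vertex not mapped to $\top$, has an even maximal priority occurring infinitely often (assuming $\Tc$ is an Even tree; the Odd case is symmetric). The key is to track how the labelling positions $\mu(v_0), \mu(v_1), \dots$ evolve along such a play, using the ordering on $\lazi{\Tc}$ together with the structure of the tree and the meaning of lazy positions.

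First I would record some monotonicity facts that follow immediately from validity of edges. If $(u,v) \in \sigma_\mu$ and $\mu(u)$ is regular, then $\mu(v) < \after(\mu(u))$, i.e. $\mu(v) \in \subtree(\mu(u))$; in particular $\mu(v) \le \mu(u)$ precisely when $\mu(v)$ is $\mu(u)$ itself or lies below it, but in any case $\mu(v)$ stays ``within the subtree hanging at $\mu(u)$''. If $\mu(u)$ is lazy, then either $\mu(v) < \mu(u)$, or $\mu(v) = \mu(u)$ and the edge is part of a reachability strategy to $\mu^{-1}(<\mu(u))$ staying in $\mu^{-1}(\le\mu(u))$. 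The crucial consequence of this last clause, which I would isolate as the main lemma, is that the labelling cannot remain forever equal to a fixed lazy position $p$: along any $\sigma_\mu$-consistent play that stays in $\mu^{-1}(\le p)$ and whose label is always $p$, the reachability strategy guarantees that after finitely many steps some vertex in $\mu^{-1}(<p)$ is reached, so the label strictly decreases. I would phrase this as: no infinite $\sigma_\mu$-play can have $\mu(v_i) = p$ for all sufficiently large $i$ with $p$ lazy.

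Next I would argue that along any infinite $\sigma_\mu$-consistent play $v_0, v_1, \dots$ with $\mu(v_0) \ne \top$, the label sequence $\mu(v_i)$ is eventually ``trapped'' in a fixed subtree. Since $\mu(v_0)\ne\top$ and regular edges keep us inside $\subtree(\mu(u))$ while lazy edges only decrease the label or keep it fixed, the sequence $\mu(v_i)$ never reaches $\top$; moreover from the regular case we get that whenever $\mu(v_i)$ is regular, all subsequent labels stay below $\after(\mu(v_i))$ until — if ever — the label climbs out, which can only happen through a position that is $\ge \after(\mu(v_i))$, and I would check that this cannot be produced by any valid edge type once we are strictly inside. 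Formalizing: let $p^\star$ be the $\le$-largest position such that $\mu(v_i) \le p^\star$ for infinitely many $i$, equivalently the ``$\limsup$'' of the labels in the finite linear order $\lazi{\Tc}$; one shows $\mu(v_i) \le p^\star$ for all large $i$ and $\mu(v_i)$ equals positions arbitrarily close to $p^\star$ (indeed equals $p^\star$ or certain positions just below it) infinitely often. By the main lemma, $p^\star$ cannot be lazy if the label is eventually constant equal to $p^\star$; a short case analysis on whether $p^\star$ is regular or lazy, combined with the $\subtree$-containment, pins down that there is a regular node $n^\star$ such that $\mu(v_i) \in \subtree(n^\star)$ for all large $i$ and $\mu(v_i) = n^\star$ infinitely often.

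Finally I would extract the priority. By the labelling conditions (and the Remark), every vertex with label in $\subtree(n^\star)$ has priority $\le \level(n^\star)$, so the maximal priority seen infinitely often is at most $\level(n^\star)$, which is even. And since $\mu(v_i) = n^\star$ for infinitely many $i$, and vertices mapped to the regular node $n^\star$ have priority exactly $\level(n^\star)$, that even priority is attained infinitely often. Hence the maximal priority occurring infinitely often equals $\level(n^\star)$, which is even, so Even wins the play; as $v_0$ ranged over all vertices not mapped to $\top$, this proves the lemma. The main obstacle I anticipate is the bookkeeping in the middle step — cleanly showing the label sequence is eventually confined to a single $\subtree(n^\star)$ and returns to the regular root $n^\star$ infinitely often — which requires carefully combining the ``regular edges stay in the subtree'' property with the ``lazy positions eventually decrease'' lemma; the first and last steps are essentially immediate from the definitions.
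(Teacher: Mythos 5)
Your overall skeleton matches the paper's proof sketch --- trap the play's labels in the subtree of a single regular node $n^\star$, show the label returns to $n^\star$ infinitely often, and read off the even priority $\level(n^\star)$ --- and your first and last steps are fine. But the middle step, which you correctly identify as the crux, rests on a false monotonicity claim and lacks the one combinatorial lemma that makes the argument work. Validity at a regular position gives $\mu(v) < \after(\mu(u))$, which is \emph{not} the same as $\mu(v) \in \subtree(\mu(u))$: the label may also drop to any position $< \mu(u)$, in particular to a strict ancestor of $\mu(u)$ (ancestors precede their descendants in the depth-first order). Consequently your claim that once the labels lie below $\after(m)$ they can never climb back out fails: the play can go from $m$ down to an ancestor $q < m$ (e.g.\ the root) and then, since $\after(q) > \after(m)$, jump via a valid edge to a position $\geq \after(m)$ inside $\subtreem(q)$. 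For the same reason the limsup $p^\star$ does not ``pin down'' $n^\star$: if the labels alternate between $\root$ and a regular descendant $q$ (both transitions are valid), then $p^\star = q$, the labels are not eventually in $\subtree(q)$, the correct node is $n^\star = \root$, and the dominant priority is $d = \level(\root)$ rather than $\level(p^\star)$. In general $n^\star$ is an ancestor of --- hence strictly smaller, in the linear order, than --- the positions realizing the limsup, so no case analysis on $p^\star$ and positions ``just below it'' can produce it.

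The missing ingredient is the sub-lemma the paper's sketch opens with: the label can only \emph{increase} along a valid edge when the current position $q$ is regular, and then the new position lies in $\subtreem(q)$; hence any passage from $\subtree(n')$ to $\subtree(n'')$ for siblings $n' \leq n''$ with parent $n$ (and, more generally, any re-ascent into a subtree after having dropped out of it) forces the play through a vertex whose label is a common ancestor, i.e.\ a vertex of even priority $\geq \level(n)$. With this in hand one takes $n^\star$ to be the node of lowest level such that the labels eventually remain in $\subtree(n^\star)$ (equivalently, the regular position of highest level attained infinitely often), and uses the sub-lemma together with your --- correct and genuinely needed --- observation that a lazy label cannot be sustained forever to show the play cannot eventually remain in $\subtreem(n^\star)$; your final paragraph then applies verbatim. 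I would encourage you to state and prove that sub-lemma explicitly and rebuild the middle step around it.
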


\begin{proof}[Proof sketch]
One first proves the following statement: if $n \in \Tc$ is a node with children $n' \leq n''$, then any path in $\sigma_\mu$ from $\mu^{-1}(\subtree(n'))$ to $\mu^{-1}(\subtree(n''))$ must visit a vertex of Even priority $\geq \level(n)$. Let $v \in V$ be such that $\mu(v) \neq \top$ and consider an infinite path $\pi \in V^\omega$ from $v$. Clearly $\pi$ never visits $\top$, since no valid edge can lead to $\top$. We then let $n$ be the node of lowest level such that $\pi$ remains in $\mu^{-1}(\subtree(n))$. Then the previous statement implies that $\pi$ visits $\mu^{-1}(n)$ infinitely often, and otherwise only vertices of priority $\leq \level(n)$, so the maximal priority seen infinitely often is $\level(n)$ which is even.
\end{proof}

\begin{definition}[Embedded attractor decomposition]
An Even (resp. Odd) embedded attractor decomposition in an Even (resp. Odd) tree $\Tc$ is an Even (resp. Odd) labelling $\mu: V \to \lazi{\Tc}$ in which all vertices are valid.
\end{definition}

\begin{remark}\label{rmk:ad-def}
This departs from the original definition of attractor decomposition proposed in~\cite{JM20}: in our embedded version, we allow lazy positions to only \emph{include} specific attractors slighly generalizing the original definition requires equality. 
\end{remark}

Lemma~\ref{lem:ad-is-win} states that attractor decompositions imply winning strategies. The converse also holds, and is stated in the following Lemma. It is easily obtained by adapting the known results over progress measures~\cite{EJ91,Jur00-} to this setting, essentially restricting to regular positions, or alternatively by constructing an attractor decomposition from an execution of the McNaughton-Zielonka algorithm, as in~\cite{JM20}.

\begin{lemma}\label{lem:existence-of-ad}
There exists an attractor decomposition embedded in an Even (resp. Odd) tree with at most $|\Gc|$ leaves, which maps all vertices which are winning for Even (resp. Odd) to a position $< \top$.
\end{lemma}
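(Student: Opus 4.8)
The plan is to produce the embedded attractor decomposition by reading it off from an execution of the McNaughton--Zielonka algorithm (equivalently, the universal attractor algorithm of~\cite{JM20}) run with a tree $\Tc$ that is just large enough. First I would recall the classical fact that running McNaughton--Zielonka on $\Gc$ produces, on the winning region $W_{\Even}$ of Even, a nested sequence of attractors indexed by the recursion structure: an outermost attractor to the highest-priority vertices, then recursively an attractor decomposition of the subgame obtained by removing it, and so on, with the nesting depth bounded by $d/2$. The branching at each level is bounded by the number of vertices stripped off, so the whole recursion tree has at most $|\Gc|$ leaves. I would package this recursion tree as an Even tree $\Tc$ with $\leq |\Gc|$ leaves, where $\level$ tracks the priority currently being peeled off, exactly as in~\cite{JM20}.

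Next I would define the labelling $\mu: V \to \lazi{\Tc}$. Each vertex $v \in W_{\Even}$ is assigned a node or lazy position of $\lazi{\Tc}$ according to where in the nested attractor structure it is ``captured'': if $v$ has the right priority $\level(n)$ and genuinely belongs to the core of the block indexed by node $n$, set $\mu(v) = n$ (a regular position); if instead $v$ is only pulled into the block via an attractor step (so it has smaller priority but is reached while staying inside the safe set), set $\mu(v)$ to the appropriate lazy position $\before(n)$ or $\after(n)$ between the relevant nodes --- this is precisely what the lazy positions were introduced for, and it is where the ``include'' relaxation of Remark~\ref{rmk:ad-def} is used rather than the equality of~\cite{JM20}. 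Every vertex of $\compl W_{\Even} = W_{\Odd}$ is mapped to $\top = \after(\root)$. The first bullet of Definition~\ref{def:labelling} holds by construction (regular positions receive vertices of exactly that priority), and the second bullet holds because a vertex lands on a lazy position only via an attractor step within a subgame all of whose priorities are at most $\level(n)$.

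Then I would verify validity of every vertex, i.e.\ check Definition~\ref{def:validEdge} edge by edge. For a vertex mapped to a regular position $n$ that is an Even vertex, McNaughton--Zielonka guarantees a strategy choice keeping the play inside the current subgame, which (by the way levels are assigned along the recursion) means the chosen successor is mapped strictly below $\after(n)$, giving a valid edge; for an Odd vertex mapped to a regular position, every successor stays in the same or a shallower part of the decomposition, so all outgoing edges satisfy the first bullet. For a vertex mapped to a lazy position $p$, the attractor step that captured it is by definition part of an Even reachability strategy to the vertices processed earlier (which are exactly $\mu^{-1}(< p)$) staying within the safe region $\mu^{-1}(\leq p)$, matching the second bullet. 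Vertices mapped to $\top$ are vacuously handled by the third bullet. Finally, Lemma~\ref{lem:existence-of-ad}'s requirement that winning vertices get positions $< \top$ is immediate since only $W_{\Odd}$ is sent to $\top$.

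The main obstacle is the bookkeeping in the previous paragraph: one has to set up the correspondence between the recursion tree of McNaughton--Zielonka and the ordered-tree-plus-lazy-positions structure $\lazi{\Tc}$ carefully enough that ``processed earlier in the recursion'' translates exactly into ``smaller in the order $\leq$ on $\lazi{\Tc}$,'' and that attractor steps land on the right lazy positions $\before(n)/\after(n)$; this is essentially the content of~\cite{JM20} transported to the embedded setting, so I would lean on that and only spell out the modifications forced by allowing inclusions instead of equalities. (Alternatively, as the excerpt notes, one can bypass McNaughton--Zielonka entirely and adapt the classical progress-measure existence argument of~\cite{EJ91,Jur00-}, building the labelling by a lifting fixpoint restricted to regular positions; I would mention this as the shorter but less illuminating route.)
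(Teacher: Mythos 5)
Your proposal is correct and follows exactly the second of the two routes the paper itself indicates for this lemma (which it leaves as a one-sentence remark): constructing the embedded attractor decomposition from an execution of the McNaughton--Zielonka / universal attractor algorithm as in~\cite{JM20}, with the lazy positions absorbing the attractor layers. Your sketch fills in the bookkeeping the paper omits, and your closing mention of the progress-measure adaptation is precisely the paper's stated alternative route.
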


\subsection{Principles of attractor decomposition lifting}\label{subsec:lifting}

Just like progress measure lifting algorithms, attractor decomposition lifting algorithms are asymmetric procedures which perform successive updates to either an Even or an Odd labelling until all vertices are valid, effectively producing an embedded attractor decomposition. In this section, we give more details about this process, prove its correctness, and discuss its complexity.

The following Proposition, which also holds in the context of progress measures, allows for efficient proofs of the two main ingredients for a lifting algorithm, namely a closure by point-wise minimum (Lemma~\ref{lem:min}), and monotonicity of the destination (Lemma~\ref{lem:monotonicity-of-destination}).

\begin{proposition}\label{prop:small-technical}
Let $\mu: V \to \lazi{\Tc}$ and $\mu': V \to \lazi{\Tc}$ be two labellings. We assume that $\mu \geq \mu'$ (pointwise, with respect to the order on $\lazi{\Tc}$), that $u \in V$ is such that $\mu(u) = \mu'(u)$, and that $u$ is valid in $\mu$. Then $u$ is valid in $\mu'$.
\end{proposition}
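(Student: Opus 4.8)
The statement is purely combinatorial: we are given two labellings $\mu \geq \mu'$ that agree on a vertex $u$ which is valid in $\mu$, and we must show $u$ is valid in $\mu'$. The natural approach is to unwind the definition of vertex validity: $u$ is valid in $\mu$ because some outgoing edge of $u$ (if $u$ is owned by the relevant player) or all outgoing edges of $u$ (if owned by the opponent) are valid in $\mu$. Since the ownership of $u$ does not depend on the labelling, it suffices to prove the corresponding statement for edges: if $e = (u,v) \in E$ is valid in $\mu$, and $\mu(u) = \mu'(u)$, and $\mu \geq \mu'$ pointwise, then $e$ is valid in $\mu'$. Indeed, from this edge-level statement the vertex-level statement follows immediately by quantifying over the outgoing edges of $u$ (the same witness edge works in the existential case, and the universal case is preserved since \emph{every} outgoing edge stays valid).

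\textbf{Edge-level argument, by cases on $\mu(u)$.} Write $p = \mu(u) = \mu'(u)$ and consider the three clauses of Definition~\ref{def:validEdge}. If $p = \top$, then $e$ is valid in $\mu'$ by the third clause, since $\mu'(u) = p = \top$. If $p$ is regular, then validity of $e$ in $\mu$ gives $\mu(v) < \after(p)$; since $\mu'(v) \leq \mu(v) < \after(p) = \after(\mu'(u))$, the edge $e$ is valid in $\mu'$ by the first clause. The interesting case is when $p$ is lazy: validity of $e$ in $\mu$ says that $e$ is part of an Even (resp. Odd) reachability strategy to $\mu^{-1}(<p)$ that stays in $\mu^{-1}(\leq p)$. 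Here I would use the pointwise inequality in the direction that makes the sublevel sets grow: since $\mu \geq \mu'$, we have $\mu^{-1}(<p) \subseteq \mu'^{-1}(<p)$ and $\mu^{-1}(\leq p) \subseteq \mu'^{-1}(\leq p)$. Hence any path witnessing that $e$ participates in a reachability strategy to $\mu^{-1}(<p)$ staying in $\mu^{-1}(\leq p)$ is \emph{a fortiori} a path reaching $\mu'^{-1}(<p)$ while staying in $\mu'^{-1}(\leq p)$. This shows $e$ is valid in $\mu'$ by the second clause.

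\textbf{Main obstacle.} The only point requiring care is the lazy case, and specifically the subtlety that ``$e$ is part of a reachability strategy to $X$ staying in $Y$'' is a statement about the existence of a \emph{global} positional strategy on the subgraph induced by $Y$, not merely a local condition at $u$. I would argue that growing the target set $X$ and the safe set $Y$ can only make such a reachability strategy easier to maintain: given the witnessing strategy $\sigma$ for $(X, Y) = (\mu^{-1}(<p), \mu^{-1}(\leq p))$, its restriction to $\mu'^{-1}(\leq p) \supseteq \mu^{-1}(\leq p)$ (extended arbitrarily, but consistently with $e$ at $u$, on the new vertices) still attracts to $\mu'^{-1}(<p)$ from every vertex from which $\sigma$ attracted to $\mu^{-1}(<p)$, because every $\sigma$-path that formerly reached $\mu^{-1}(<p) \subseteq \mu'^{-1}(<p)$ still does so, and paths that enter the newly added region have entered $\mu'^{-1}(<p)$ already (since a vertex in $\mu'^{-1}(\leq p) \setminus \mu^{-1}(\leq p)$ has $\mu'$-label $\leq p$ but was not counted before, so in particular... ) — here one must be slightly careful and instead note that it is cleanest to directly invoke the reformulation in the Remark after Definition~\ref{def:validEdge}: $e$ is valid at a lazy $p$ iff either $\mu(v) < p$, or $\mu(v) = p$ and the relevant player can force reaching $\mu^{-1}(<p)$ from $v$ while staying in $\mu^{-1}(\leq p)$. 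In the first subcase, $\mu'(v) \leq \mu(v) < p = \mu'(u)$ gives validity immediately. In the second subcase, $\mu'(v) \leq \mu(v) = p$; if $\mu'(v) < p$ we are again done by the first subcase, and if $\mu'(v) = p$ we use the monotonicity of attractors with respect to enlarging both the target and the safe region, as above. This monotonicity of the ``attract to $X$ through $Y$'' operator in both arguments is standard and follows directly from the definition of attractors given in the preliminaries, so no new machinery is needed.
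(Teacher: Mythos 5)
Your proof is correct and follows essentially the same route as the paper's: reduce vertex validity to edge validity (noting ownership is labelling-independent), then case on $\mu(u)$ being regular, lazy, or $\top$, using $\mu'(v)\leq\mu(v)$ in the regular case and the inclusions $\mu^{-1}(<p)\subseteq\mu'^{-1}(<p)$ and $\mu^{-1}(\leq p)\subseteq\mu'^{-1}(\leq p)$ together with monotonicity of ``attract to $X$ through $Y$'' in both arguments in the lazy case. The only difference is that you belabour the lazy case before landing on the right one-line argument (the witnessing strategy's plays never leave the old safe set before hitting the old target, so enlarging both sets changes nothing), which the paper dispatches in a single sentence.
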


\begin{proof}
Let $e=(u,v) \in \edge{\Gc}$ be an edge in $\Gc$ which is valid in $\mu$. We prove that $e$ is valid in $\mu'$, which implies the wanted result.
\begin{itemize}
\item If $\mu(u)$ is a regular node, then $\mu'(v) < \mu(u) < \after(\mu(u)) = \after(\mu(v))$, so $e$ is valid in~$\mu'$.
\item If $\mu(u)=\ell$ is a lazy node, since $\invimnp{\mu}{< \ell} \subseteq \invim{\mu'}{< \ell}$ and $\invimnp{\mu}{\leq \ell} \subseteq \invim{\mu'}{\leq \ell}$, an attracting edge to $\invimnp{\mu}{< \ell}$ through $\invimnp{\mu}{\leq \ell}$ is also an attracting edge to $\invim{\mu'}{< \ell}$ through $\invim{\mu'}{< \ell}$.
\end{itemize}
\end{proof}

With Proposition~\ref{prop:small-technical} in hands, we are ready to prove closure by minimality.

\begin{lemma}\label{lem:min}
Let $\mu_1: V \to \lazi{\Tc}$ and $\mu_2: V \to \lazi{\Tc}$ be two attractor decompositions embedded in $\Tc$. Let $\mu : V \to \lazi{\Tc}$ be the pointwise minimum of $\mu_1$ and $\mu_2$, with respect to the order on $\lazi{\Tc}$. Then $\mu$ is an attractor decomposition embedded in $\Tc$.
\end{lemma}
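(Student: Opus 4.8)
The plan is to show that every vertex is valid in the pointwise minimum $\mu$, which by definition establishes that $\mu$ is an embedded attractor decomposition. The key tool is Proposition~\ref{prop:small-technical}: for each vertex $u$, one of the two decompositions realizes the minimum value at $u$, i.e.\ $\mu(u) = \mu_i(u)$ for some $i \in \{1,2\}$, and since $\mu \leq \mu_i$ pointwise and $u$ is valid in $\mu_i$, the Proposition immediately yields that $u$ is valid in $\mu$.

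More precisely, first I would fix an arbitrary $u \in V$ and pick $i \in \{1,2\}$ with $\mu_i(u) = \min(\mu_1(u), \mu_2(u)) = \mu(u)$. Then I note that $\mu \geq \mu_i$ fails in general but $\mu \leq \mu_i$ holds pointwise by construction of the pointwise minimum; applying Proposition~\ref{prop:small-technical} with the roles ``$\mu$'' $:= \mu_i$ and ``$\mu'$'' $:= \mu$ (so that the hypothesis $\mu_i \geq \mu$ pointwise, $\mu_i(u) = \mu(u)$, and $u$ valid in $\mu_i$ are all met) gives that $u$ is valid in $\mu$. Since $u$ was arbitrary, all vertices are valid in $\mu$, and $\mu$ is a labelling (the priority constraints in Definition~\ref{def:labelling} are preserved since $\mu(u) \in \{\mu_1(u), \mu_2(u)\}$ at each vertex and each $\mu_j$ is a labelling), so $\mu$ is an embedded attractor decomposition in $\Tc$.

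The only point requiring a little care---and the closest thing to an obstacle---is checking that $\mu$ is indeed a labelling, i.e.\ that the priority conditions of Definition~\ref{def:labelling} survive taking pointwise minima. This is straightforward: for each vertex $v$, the position $\mu(v)$ equals either $\mu_1(v)$ or $\mu_2(v)$, and whichever one it is, the corresponding labelling $\mu_j$ already guarantees the required relation between $\pi(v)$ and $\level(\mu(v))$ (equality for regular positions, $\leq$ for lazy positions). Beyond that, the argument is essentially a direct invocation of Proposition~\ref{prop:small-technical} applied vertex by vertex, so there is no real difficulty; the work has been front-loaded into that Proposition.
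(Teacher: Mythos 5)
Your proof is correct and follows essentially the same route as the paper: pick the index $i$ realizing the minimum at each vertex and apply Proposition~\ref{prop:small-technical} with $\mu_i \geq \mu$ agreeing at that vertex. The paper dismisses the labelling check as clear, whereas you spell it out, but the argument is identical.
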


\begin{proof}
It is clear that $\mu$ indeed defines a labelling. We now prove its validity. Let $v \in V$, and let $i \in \{1,2\}$ be such that $\mu(v)=\mu_i(v)$. Then $\mu_i \geq \mu$, and $v$ is valid in $\mu_i$, so we conclude that $v$ is valid in $\mu$ by applying Proposition~\ref{prop:small-technical}.
\end{proof}

Note that the labelling which maps all vertices to $\top$ is valid. Hence, by Lemma~\ref{lem:min}, the minimal attractor decomposition embedded in a given tree $\Tc$ is well defined.

\begin{remark}
It is not hard to see that in a minimal attractor decompositions $\mu$ embedded in $\Tc$, if $\ell \in \lazi{\Tc}$ is a lazy position, then $\mu^{-1}(\ell)$ is precisely the attractor to $\mu^{-1}(< \ell)$ through $\mu^{-1}(\leq \ell)$. In that way, we recover (see Remark~\ref{rmk:ad-def}) the original definition from~\cite{JM20} as far as minimal embedded attractor decompositions are concerned.
\end{remark}

We now define the very important concept of destination of a vertex in a labelling.

\begin{definition}
Let $\mu$ be a labelling, and $v \in V$. We call \emph{destination} of $v$ in $\mu$ the smallest position $r \geq \mu(v)$ such that $v$ is valid in the labelling $\mu'$ which satisfies $\mu'(v) = r$ and is everywhere else identical to $\mu$. We denote it by $\dest{\mu}{v}.$
\end{definition}

Note that $\dest{\mu}{v} \geq \mu(v)$, with equality if and only if $v$ is valid in $\mu$. We have the following monotonicity property over destinations.

\begin{lemma}\label{lem:monotonicity-of-destination}
Let $\mu_1,\mu_2$ be labellings in $\Tc$, such that $\mu_1 \leq \mu_2$, and let $v \in V$. Then it holds that $\dest{\mu_1}{v} \leq \dest{\mu_2}{v}$.
\end{lemma}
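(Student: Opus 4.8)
The statement is a monotonicity property of the destination operator, and the natural tool is exactly Proposition~\ref{prop:small-technical}, which we have just proved. The plan is to unfold the definition of $\dest{\mu_2}{v}$, transport the relevant single-coordinate modification of $\mu_1$ so that it agrees with the corresponding modification of $\mu_2$ at $v$, and then invoke Proposition~\ref{prop:small-technical} to deduce validity of $v$ in that modified $\mu_1$-labelling; minimality of the destination in $\mu_1$ then gives the inequality.

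\textbf{Key steps.} First I would set $r_2 = \dest{\mu_2}{v}$ and observe that by definition of destination we have $r_2 \geq \mu_2(v) \geq \mu_1(v)$, so $r_2$ is an admissible candidate position for $v$ in a modification of $\mu_1$ as well (it lies above $\mu_1(v)$). Let $\mu_2'$ be the labelling that sends $v$ to $r_2$ and agrees with $\mu_2$ elsewhere; by definition of $\dest{\mu_2}{v}$, the vertex $v$ is valid in $\mu_2'$. Now let $\mu_1'$ be the labelling that sends $v$ to $r_2$ and agrees with $\mu_1$ elsewhere. The point is that $\mu_1' \leq \mu_2'$ pointwise: at $v$ both equal $r_2$, and at every other vertex $u$ we have $\mu_1'(u) = \mu_1(u) \leq \mu_2(u) = \mu_2'(u)$ by the hypothesis $\mu_1 \leq \mu_2$. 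Moreover $\mu_1'(v) = r_2 = \mu_2'(v)$, and $v$ is valid in $\mu_2'$. Applying Proposition~\ref{prop:small-technical} (with the roles $\mu \mapsto \mu_2'$, $\mu' \mapsto \mu_1'$, $u \mapsto v$), we conclude that $v$ is valid in $\mu_1'$. Since $\dest{\mu_1}{v}$ is by definition the \emph{smallest} position $r \geq \mu_1(v)$ for which $v$ becomes valid after moving $v$ to $r$, and $r_2$ is such a position, we get $\dest{\mu_1}{v} \leq r_2 = \dest{\mu_2}{v}$, as desired.

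\textbf{Caveat and main obstacle.} One subtlety to check is that $\mu_1'$ and $\mu_2'$ are genuinely labellings in the sense of Definition~\ref{def:labelling}, i.e. that moving $v$ to $r_2$ respects the priority constraints; but this is automatic because $\mu_2'$ is already known to be a valid single-coordinate modification of $\mu_2$ (it is the one witnessing $\dest{\mu_2}{v}$), so $r_2$ satisfies the priority constraint for $v$, and $\mu_1'$ differs from $\mu_1$ only at $v$ where it also takes value $r_2$. The only real content is lining up the hypotheses of Proposition~\ref{prop:small-technical} correctly — in particular making sure the pointwise inequality $\mu_1' \leq \mu_2'$ holds at \emph{every} coordinate including $v$ (where it is an equality) — and this is straightforward. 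I do not expect any genuine obstacle here; the lemma is essentially a one-line corollary of Proposition~\ref{prop:small-technical} once the correct single-coordinate modifications are named.
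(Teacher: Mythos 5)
Your proof is correct and follows exactly the paper's argument: define the single-coordinate modifications $\mu_1'$ and $\mu_2'$ at $v$ with value $\dest{\mu_2}{v}$, note $\mu_1' \leq \mu_2'$ with equality at $v$, and apply Proposition~\ref{prop:small-technical} to transfer validity, then conclude by minimality of the destination. The extra checks you flag (that $\dest{\mu_2}{v} \geq \mu_1(v)$ and that the modified maps are genuine labellings) are sound and slightly more careful than the paper's own write-up.
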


\begin{proof}
It suffices to prove that $v$ is valid in the labelling $\mu'$ which satisfies $\mu'(v)=\dest{\mu_2}{v}$ and is everywhere else identical to $\mu_1$. Let $\mu'_2$ be the labelling such that $\mu'_2(v)=\dest{\mu_2}{v}$, and $\mu'_2$ is identical to $\mu_2$ elsewhere. By definition of $\dest{\mu_2}{v}$, $v$ is valid in $\mu'_2$. Now it holds that $\mu'_2 \geq \mu'$, so Proposition~\ref{prop:small-technical} concludes.
\end{proof}

Given a labelling $\mu$ and a vertex $v$ which is invalid in $\mu$, we call ``lifting $v$ in $\mu$'' the operation of augmenting $\mu(v)$ to a larger value, which is $\leq \dest{\mu}{v}$. The general asymetric lifting algorithm, Algorithm~\ref{alg:ad-lifting}, is very simply described as follows: perform any arbitrarily chosen lift, and iterate, until every vertex is valid.

\begin{remark}
Let $\Tc^\Even$ be an Even tree of height $d/2$ and $n$ be its root. Then the smallest labelling into $\Tc^\Even$ is easy to characterize: vertices of priority $\level(\root)=d$ are mapped to $\root \in \lazi{\Tc}$, while vertices of priority $<d$ are mapped to $\succ(\root) \in \lazi{\Tc}$. In an Odd tree of height $d/2$, the smallest labelling maps all vertices to $\succ(\root)$.
\end{remark}

\begin{algorithm}[htb]
  \DontPrintSemicolon
Let $\mu \leftarrow$ smallest labelling of $V(\Gc)$ into $\Tc$ \label{lin:init} \;
\While{there is an invalid vertex $v$ in $\mu$}{
Lift $v$ in $\mu$ \;
}
\Return $\mu$ \;
\caption{The asymmetric attractor decomposition lifting algorithm in an Even or Odd tree $\Tc$.}
\label{alg:ad-lifting}
\end{algorithm}

The following Lemma, analogous to the well-known property of progress-measure lifting algorithms, states correctness of Algorithm~\ref{alg:ad-lifting}.

\begin{lemma}\label{lem:ad-lifting-correct}
Algorithm~\ref{alg:ad-lifting} outputs the smallest attractor decomposition of $\G$ embedded in $\Tc$.
\end{lemma}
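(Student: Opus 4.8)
The plan is to prove the two standard facts for iterative lifting algorithms, adapted to the attractor-decomposition setting: (i) the algorithm terminates, and (ii) whenever it terminates, the labelling it produces is the \emph{smallest} attractor decomposition embedded in $\Tc$. Throughout, I would maintain as an invariant that the current labelling $\mu$ is always pointwise $\leq$ the smallest embedded attractor decomposition $\mu^{\ast}$ (which is well-defined by Lemma~\ref{lem:min} and the remark following it). Termination then follows because each lift strictly increases $\mu(v)$ for some $v$, $\mu$ only ever increases pointwise, it stays bounded above by $\mu^{\ast} \leq (v \mapsto \top)$, and $\lazi{\Tc}$ is a finite ordered set; hence only finitely many lifts can occur. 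When the loop exits, every vertex is valid in $\mu$, so $\mu$ is by definition an embedded attractor decomposition, and combined with $\mu \leq \mu^{\ast}$ and minimality of $\mu^{\ast}$ we get $\mu = \mu^{\ast}$.

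The core of the argument is therefore establishing and preserving the invariant $\mu \leq \mu^{\ast}$. First I would check it holds at initialization (line~\ref{lin:init}): the smallest labelling of $V$ into $\Tc$ is pointwise below \emph{every} labelling into $\Tc$, in particular below $\mu^{\ast}$ — this is immediate from the characterization of the smallest labelling given in the remark before Algorithm~\ref{alg:ad-lifting} (priority-$d$ vertices at $\root$, everything else at $\succ(\root)$, in the Even case, and all vertices at $\succ(\root)$ in the Odd case), since any labelling must respect the level constraints of Definition~\ref{def:labelling}. For the inductive step, suppose $\mu \leq \mu^{\ast}$ and the algorithm lifts an invalid vertex $v$, replacing $\mu(v)$ by some value $r$ with $\mu(v) < r \leq \dest{\mu}{v}$. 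I must show $r \leq \mu^{\ast}(v)$. Since $v$ is valid in $\mu^{\ast}$, we have $\dest{\mu^{\ast}}{v} = \mu^{\ast}(v)$; and by monotonicity of destinations (Lemma~\ref{lem:monotonicity-of-destination}) applied to $\mu \leq \mu^{\ast}$, we get $\dest{\mu}{v} \leq \dest{\mu^{\ast}}{v} = \mu^{\ast}(v)$. Hence $r \leq \dest{\mu}{v} \leq \mu^{\ast}(v)$, and since $\mu$ agrees with $\mu^{\ast}$ on all other vertices being unchanged and already $\leq \mu^{\ast}$ there, the updated labelling is still pointwise $\leq \mu^{\ast}$. (One should also note the lift keeps $\mu$ a labelling, i.e.\ the level constraints are preserved — this follows because $r \leq \mu^{\ast}(v)$ lies in a range of positions all of level $\geq \pi(v)$ by the remark after Definition~\ref{def:labelling}, or more directly because $\dest{\mu}{v}$ is only defined among positions making the level constraint hold.)

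Having closed the invariant, termination and correctness follow as sketched in the first paragraph: the loop terminates since $\mu$ is nondecreasing, bounded above by $\mu^{\ast}$, and takes values in a finite set (and at least one coordinate strictly increases per iteration, so no stalling); upon exit every vertex is valid so $\mu$ is an embedded attractor decomposition; and $\mu \leq \mu^{\ast}$ together with minimality of $\mu^{\ast}$ forces $\mu = \mu^{\ast}$.

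\textbf{Main obstacle.} The substantive lemmas — closure under pointwise minimum (Lemma~\ref{lem:min}) and monotonicity of destination (Lemma~\ref{lem:monotonicity-of-destination}) — are already established in the excerpt, so the remaining work is essentially bookkeeping. The one point deserving care is verifying that a lift preserves the \emph{labelling} property (the level constraints of Definition~\ref{def:labelling}), not merely the inequality $\mu \leq \mu^{\ast}$: one needs that every position strictly between $\mu(v)$ and $\dest{\mu}{v}$ is an admissible label for $v$, which should be read off from the interplay between $\level$ on $\lazi{\Tc}$ and the definition of validity of an edge, but is worth spelling out explicitly rather than waving through.
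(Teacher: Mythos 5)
Your proof is correct and follows essentially the same route as the paper's: maintain the invariant $\mu \leq \lambda$ (where $\lambda$ is the minimal embedded attractor decomposition) via Lemma~\ref{lem:monotonicity-of-destination} and the fact that $\dest{\lambda}{v} = \lambda(v)$ for valid $v$, then conclude by minimality upon exit. Your added remarks on termination and on lifts preserving the level constraints of Definition~\ref{def:labelling} are details the paper leaves implicit, but the argument is the same.
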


\begin{proof}
Let $\lambda: V \to \lazi{\Tc}$ be the smallest attractor decomposition of $\Gc$ embedded in $\Tc$. We prove by induction on the execution of the algorithm, that its output $\mu$ satisfies $\mu \leq \lambda$. This holds trivially at initialisation on line~\ref{lin:init}, and is maintenained by performing lifts a lift at vertex $v$. Indeed, since $v$ is valid in $\lambda$, and by monotonicity of destinations, we have $\dest{\mu}{v} \leq \dest{\lambda}{v} = \lambda(v)$. Upon exiting the while loop, it holds that $\mu$ is valid, so by minimality of $\lambda$, $\mu \geq \lambda$. Hence, $\mu = \lambda$.
\end{proof}

In particular, any vertex which is not mapped to $\top$ in the obtained embedded attractor decomposition $\lambda$ is winning for Even. The following theorem states that if the tree $\Tc$ is large enough then the converse also holds.

\begin{theorem}
If $\Tc^\Even$ is $|\G|$-universal, then Algorithm~\ref{alg:ad-lifting} returns an attractor decomposition $\lambda$ such that $\lambda^{-1}(< \top)$ is the Even winning set in $\Gc$.
\end{theorem}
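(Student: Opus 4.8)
The plan is to combine the correctness statement of Algorithm~\ref{alg:ad-lifting} (Lemma~\ref{lem:ad-lifting-correct}) with the existence result for embedded attractor decompositions in universal trees (Lemma~\ref{lem:existence-of-ad}). By Lemma~\ref{lem:ad-lifting-correct}, the algorithm outputs $\lambda$, the \emph{smallest} attractor decomposition of $\Gc$ embedded in $\Tc^\Even$. By Lemma~\ref{lem:ad-is-win}, every vertex in $\lambda^{-1}(< \top)$ is won by Even (via the strategy $\sigma_\lambda$), so one inclusion, $\lambda^{-1}(< \top) \subseteq W_\Even$, is immediate and does not even use universality. It remains to prove the reverse inclusion: every vertex winning for Even is mapped by $\lambda$ to a position strictly below $\top$.

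For the reverse inclusion, I would first invoke Lemma~\ref{lem:existence-of-ad}: there exists an attractor decomposition $\mu_0$ embedded in \emph{some} Even tree $\Sc$ with at most $|\Gc|$ leaves, such that $\mu_0$ maps every vertex winning for Even to a position $< \top$. Since $\Tc^\Even$ is $|\Gc|$-universal, there is a tree inclusion $\phi : \Sc \embeds \Tc^\Even$. The key step is then to push $\mu_0$ forward along $\phi$ to obtain a labelling $\mu := \bar\phi \circ \mu_0$ of $V$ into $\lazi{\Tc^\Even}$, where $\bar\phi$ is the natural extension of $\phi$ to lazifications (a tree inclusion preserves levels and the ordered-tree structure, and the identifications defining $\lazi{\cdot}$ are respected, so $\bar\phi : \lazi{\Sc} \to \lazi{\Tc^\Even}$ is a well-defined order- and level-preserving injection with $\bar\phi(\top_\Sc) = \top_{\Tc^\Even}$). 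One checks that $\mu$ is again a labelling: the priority constraints in Definition~\ref{def:labelling} carry over because $\bar\phi$ preserves levels. One then checks that $\mu$ is an embedded attractor decomposition, i.e.\ every vertex is valid: validity of an edge (Definition~\ref{def:validEdge}) is preserved because $\bar\phi$ is order-preserving (for the regular case) and because, for a lazy position $\ell$, $\bar\phi$ maps $\mu_0^{-1}(<\ell)$ and $\mu_0^{-1}(\le\ell)$ bijectively onto $\mu^{-1}(<\bar\phi(\ell))$ and $\mu^{-1}(\le\bar\phi(\ell))$ respectively, so the very same reachability strategy still witnesses the attracting condition. Hence $\mu$ is an attractor decomposition of $\Gc$ embedded in $\Tc^\Even$, and by construction $\mu$ maps every Even-winning vertex to a position $< \top$.

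Finally, by Lemma~\ref{lem:ad-lifting-correct}, $\lambda \le \mu$ pointwise (since $\lambda$ is the smallest embedded attractor decomposition in $\Tc^\Even$). Therefore, for every Even-winning vertex $v$, we have $\lambda(v) \le \mu(v) < \top$, which gives $W_\Even \subseteq \lambda^{-1}(<\top)$. Combined with the first inclusion, $\lambda^{-1}(<\top) = W_\Even$, as claimed.

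The main obstacle is the middle step: carefully setting up the pushforward $\bar\phi$ along the tree inclusion and verifying that it takes embedded attractor decompositions to embedded attractor decompositions. The subtlety is that validity at lazy positions refers to reachability strategies inside preimage sets of the form $\mu^{-1}(\le\ell)$, so one must argue that these sets behave correctly under $\bar\phi$ — i.e.\ that $\bar\phi$ restricted to the order is an embedding of linear orders and that it commutes with the $\before/\after$ structure, so that "being in $\mu_0^{-1}(\le\ell)$" and "being in $\mu^{-1}(\le\bar\phi(\ell))$" describe the same set of vertices. Once this structural bookkeeping is done, the rest is a direct application of the two cited lemmas. Since $\bar\phi$ need not be surjective, it is only a pushforward of labellings, not an isomorphism, but this is harmless because we only need the inequality $\lambda \le \mu$, not equality.
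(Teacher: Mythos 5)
Your proposal is correct and follows essentially the same route as the paper's own proof sketch: one inclusion from Lemma~\ref{lem:ad-is-win}, the other from Lemma~\ref{lem:existence-of-ad} combined with a tree inclusion into the universal tree $\Tc^\Even$, extended to lazifications, followed by minimality of $\lambda$ via Lemma~\ref{lem:ad-lifting-correct}. You merely spell out in more detail the verification (which the paper leaves implicit) that the pushforward along the lazified inclusion preserves validity at lazy positions.
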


\begin{proof}[Proof sketch]
By Lemma~\ref{lem:existence-of-ad}, there is an embedded attractor decomposition $\mu_0$ in an Even tree $\Tc_0^\Even$ with at most $|\G|$ leaves, such that $\mu_0^{-1}(\leq \top_0)$ is the Even winning set in $\Gc$. By composition with the inclusion of $\Tc_0^\Even$ in $\Tc^\Even$ (inclusions over trees extend into inclusions over their lazified versions), we obtain an attractor decomposition $\mu$ embedded in $\Tc^\Even$ such that $\mu^{-1}(\leq \top_0)$ is the Even winning set. We conclude by minimality of $\lambda$.
\end{proof}

We now comment on the complexity of Algorithm~\ref{alg:ad-lifting}. The following is easy to see since each lift updates a vertex to a stricly greater position in $\lazi{\Tc}$

\begin{lemma}
An execution of Algorithm~\ref{alg:ad-lifting} performs at most $|\G||\lazi{\Tc}| = O(|\G||\Tc|)$ lifts in total.
\end{lemma}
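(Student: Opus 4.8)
The plan is to argue by a straightforward counting argument on the labels, with no appeal to the more subtle lemmas of the section. First I would record the key monotonicity invariant: throughout an execution of Algorithm~\ref{alg:ad-lifting}, the current labelling $\mu$ is pointwise non-decreasing over time. Indeed, each iteration of the while-loop performs a lift at some invalid vertex $v$, which by definition replaces $\mu(v)$ by a value that is \emph{strictly} larger in the order on $\lazi{\Tc}$ (and which is $\leq \dest{\mu}{v}$), while leaving every other label unchanged. In particular, for every fixed vertex $v \in V$, the sequence of successive values taken by $\mu(v)$ during the execution is strictly increasing in the finite set $\lazi{\Tc}$; hence $v$ can be the vertex chosen for a lift at most $|\lazi{\Tc}| - 1$ times. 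Since each iteration lifts exactly one vertex, summing this bound over the $|V(\Gc)| = |\Gc|$ vertices yields a total of at most $|\Gc|\cdot(|\lazi{\Tc}| - 1) \leq |\Gc||\lazi{\Tc}|$ lifts.

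It then remains to justify the asymptotic $|\lazi{\Tc}| = O(|\Tc|)$. By construction, $\lazi{\Tc}$ is obtained from $\Tc$ by adding, for each node $n \in \Tc$, at most the two new positions $\before(n)$ and $\after(n)$; the subsequent identifications of $\after(n'')$ with $\before(n')$ for consecutive children, and the suppression of $\before(\root)$, only decrease the count. Hence $|\lazi{\Tc}| \leq |\Tc| + 2|\Tc| = 3|\Tc|$, and therefore $|\Gc||\lazi{\Tc}| = O(|\Gc||\Tc|)$, which is exactly the claimed bound.

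I do not anticipate any real obstacle: the argument is essentially bookkeeping. The only point that warrants an explicit sentence is the monotonicity invariant, namely that a lift never decreases any label and affects only the lifted vertex; this is immediate from the definition of "lifting $v$ in $\mu$'' given just before the algorithm. If one wanted to be slightly more precise in the first step, one could phrase it as a potential argument, taking as potential the sum over $v \in V$ of the rank of $\mu(v)$ in the linear order on $\lazi{\Tc}$: this potential strictly increases by at least $1$ at each lift and is bounded above by $|\Gc||\lazi{\Tc}|$, giving the same conclusion.
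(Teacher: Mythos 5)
Your argument is correct and is exactly the paper's (the paper merely states the bound is ``easy to see since each lift updates a vertex to a strictly greater position in $\lazi{\Tc}$''); you have simply spelled out the per-vertex counting and the bound $|\lazi{\Tc}| \leq 3|\Tc|$ explicitly. Nothing further is needed.
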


In succint constructions of universal trees, operations such as checking validity or performing lifts can be done in polylogarithmic time~\cite{JL17,DJT20}, hence incur no significant complexity overhead to the above bounds. In particular, asymmetric attractor decomposition lifting iterations achieve state of the art worst-case runtime.

\subsection{Formal similarities with progress measure lifting algorithms}\label{subsec:pm}

Using the vocabulary developped in~\cite{CF19} and~\cite{FGO20} enables us to formally describe attractor decomposition lifting algorithms as progress measure lifting algorithms inside a specific novel construction of linear (universal) graphs. As this contribution is independant from the remainder of the paper, we present it in Appendix~\ref{app:progress-measure}.

\section{A symmetric lifting algorithm}\label{sec:symmetric}

We will now introduce a novel symmetric lifting algorithm. Roughly, it runs in parallel two asymmetric attractor decomposition lifting algorithms, one for each player, in a specific and constrained manner. However, it goes beyond the scope of Algorithm~\ref{alg:ad-lifting} by making the two labellings interact to accelerate the process. We first define the important concepts that are needed to describe and visualize the algorithm, and explain the ideas behind it (Subsection~\ref{subsec:intro-to-alg}). We then provide a pseudo-code, and formally prove correctness in Subsection~\ref{subsec:pseudo-code-correctness}. Finally, Subsection~\ref{subsec:complexity} gives a complexity upper bound.

\subsection{Introducing the algorithm}\label{subsec:intro-to-alg}

Let $d$ be an even integer, and fix a parity game $\G$ with priorities in $\{1, \dots, d\}$. We will also fix an Even tree $\Tc^\Even$ of height $d/2$, and an Odd tree $\Tc^\Odd$ of height $d/2$. Recall that $\Tc^\Even$ has levels in $\{0,2,\dots, d\}$ while $\Tc^\Odd$ has levels in $\{1, 3, \cdots, d+1\}$. We use $\top^\Even$ (resp. $\top^\Odd$) and $\root^\Even$ (resp. $\root^\Odd$) to refer to the largest and smallest vertices of $\Tc^\Even$ (resp. $\Tc^\Odd$).

The universal attractor decomposition algorithm from~\cite{JM20} explores the trees $\Tc^\Even$ and $\Tc^\Odd$ in a depth-first manner, by alternating between each trees. The next definition formalizes the structure of its recursive calls in a product structure.

\begin{definition}
We define the interleaving $\Tc$ of $\Tc^\Even$ and $\Tc^\Odd$ as a rooted levelled ordered tree with levels in $\{1, \dots, d\}$ as follows:
\begin{itemize}
\item The nodes in $\Tc$ are given by all pairs $(n^\P, n^\Q) \in \Tc^\Even \times \Tc^\Odd \cup \Tc^\Odd \times \Tc^\Even$ such that $\level_\P(n^\P) = \level_\Q(n^\Q)-1$.
\item A node $n=(n^\P, n^\Q)$ of $\Tc$ has level $\level(n)=\level(n^\P)$.
\item A node $n=(n^\P, n^\Q)$ of $\Tc$ has children $(n^\Q_1,n^\P), \dots, (n^\Q_k,n^\P),$ in this order, where $n^\Q_1 \leq \dots \leq n^\Q_k$ are the $k$ ordered children of $n^\Q$ in $\Tc^\Q$. If $n^\Q$ is a leaf in $\Tc^\Q$, then $n$ is a leaf in $\Tc$.
\end{itemize}
\end{definition}

The algorithm stores an Even labelling $\mu^\Even : V \to \lazi{\Tc^\Even}$ inside $\Tc^\Even$, and an Odd labelling $\mu^\Odd: V \to \lazi{\Tc^\Odd}$ inside $\Tc^\Odd$. We will use the notation $\mu=(\mu^\Even, \mu^\Odd)$ to refer to such a pair of labellings. For a vertex $v \in V$, we have $\mu(v)=(\mu^\Even(v), \mu^\Odd(v)) \in \lazi{\Tc^\Even} \times \lazi{\Tc^\Odd}$.

\subsubsection*{Visualizing the execution on the grid}

It is very convenient to visualize the evolution of the pair of labellings $\mu$ on a 2-dimensional grid, where the axes represent $\lazi{\Tc^\Even}$ and $\lazi{\Tc^\Odd}$, respecting the linear orders we have equiped them with. On this grid, and at any given point in the algorithm, one may display for each vertex $v \in V$, the point with coordinates $\mu(v)=(\mu^\Even(v), \mu^\Odd(v))$. Lifts in $\mu^\Even$ and $\mu^\Odd$ then respectively correspond to positive vertical or horizontal shifts on the grid. We equip the grid $\lazi{\Tc^\Even} \times \lazi{\Tc^\Odd}$ with the (pointwise) partial order $\leq$ defined by $r_1 = (r_1^\Even, r_1^\Odd) \leq r_2 = (r_2^\Even, r_2^\Odd)$ if and only if $r_1^\Even \leq r^2_\Even$ and $r_1^\Odd \leq r_2^\Odd$.

\begin{definition}
Let $\mu=(\mu^\Even, \mu^\Odd)$ be a pair of labellings and $v \in V$ be a vertex. 
\begin{itemize}
\item We define the destination $\dest{\mu}{v}$ of $v$ in $\mu$ to by $\dest{\mu}{v} =(\dest{\mu^\Even}{v}, \dest{\mu^\Odd}{v}) \in \lazi{\Tc^\Even} \times \lazi{\Tc^\Odd}$.
\item We call ``lifting $v$ in $\mu$'' the operation of updating $\mu(v)$ to a stricly greater value which is~$\leq \dest{\mu}{v}$.
\end{itemize}
\end{definition}

\begin{remark}
Note that a lift can be performed at $v$ in $\mu$ if and only if $d \neq \mu(v)$, that is, $v$ is invalid in either $\mu^\Even$ or $\mu^\Odd.$
\end{remark}

Figure~\ref{fig:grid-lifts} depicts the grid $\lazi{\Tc^\Even} \times \lazi{\Tc^\Odd}$, and illustrates lifting at $v$ in $\mu$.

\begin{figure}[h]
  \makebox[\textwidth][c]{\scalebox{.8}{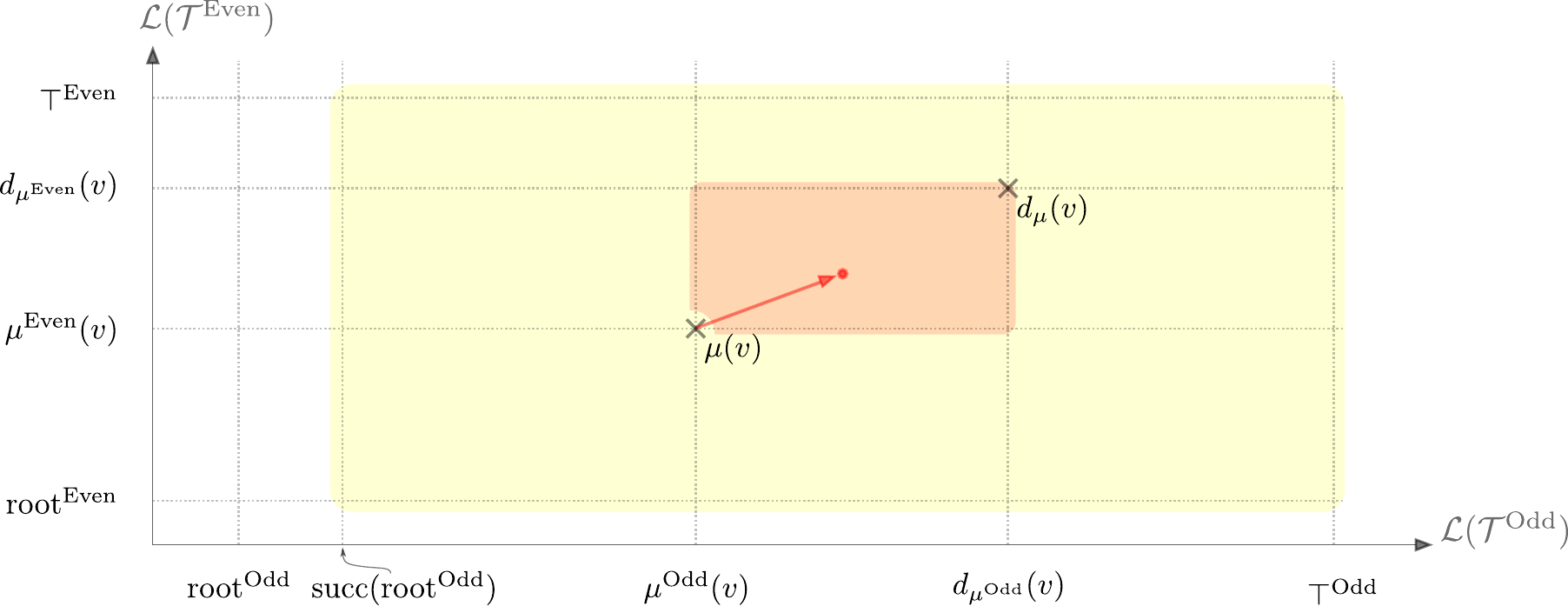}}
\caption{Visualizing the evolution of the algorithm on the grid $\lazi{\Tc^\Even} \times \lazi{\Tc^\Odd}$. The yellow box represents possible locations for $\mu(v)$. Note that it excludes the absciss $(\root^\Odd)$, since $\Gc$ has no vertex of priority $d+1.$ The crosses depict $\mu(v)$ and its destination $\dest{\mu}{v}$ in $\mu$ (which depends on the location of other vertices), and the red box represents possible arrivals for a lift at $v$. 
\label{fig:grid-lifts}}
\end{figure}

Consider a vertex $v \in \Gc$. Since $v$ cannot be winning for both players, we know that regardless of $\Tc^\Even$ and $\Tc^\Odd$, it holds that for (at least) one player $\R \in \{\Even, \Odd\}$, the smallest $\R$ attractor decomposition $\lambda^\R$ embedded in $\Tc^\R$ maps $v$ to $\top^\R$. Roughly, our algorithm will perform successive lifts in both labellings, until all vertices satisfy either $\mu^\Even(v)=\top^\Even$ or $\mu^\Odd(v)=\top^\Odd$. In other words, for each vertex $v$, $\mu(v)$ undergoes successive shifts until it hits either the right-most or the top-most border.

To make for a clear description of the algorithm, we need to define some subsets of $\lazi{\Tc^\Even} \times \lazi{\Tc^\Odd}$. These are represented on the grid in Figure~\ref{fig:alg-sym}.

\begin{definition}
Let $n=(n^\P, n^\Q) \in \Tc$ be a node of the interleaving $\Tc$ of $\Tc^\Even$ and $\Tc^\Odd$. We let $\times_\P$ denote the regular product if $\P=\Even$, and $(X,Y) \mapsto Y \times X$ otherwise. We define four subsets of $\lazi{\Tc^\Even} \times \lazi{\Tc^\Odd}$. The two latter are only defined if $n$ is not the root of $\Tc$.
\begin{itemize}
\item $\scope(n) = \subtree(n^\P) \times_\P \subtreem(n^\Q)$,
\item $\Gset(n) = \subtree(n^P) \times_P \{n^\Q\} $
\item $\Bset(n) =  \{\before(n^\P)\} \times_\P \subtree(n^\Q)$, and
\item $\Aset(n) = \{\after(n^\P)\} \times_\P \subtreem(n^\Q)$.
\end{itemize}
Note that if $n \leq n'$ are two successive children of the same node in $\Tc$, then $\Bset(n') = \Aset(n)$.
\end{definition}

\begin{remark}
By Definition~\ref{def:labelling}, a pair of labellings $\mu=(\mu^\Even, \mu^\Odd)$, can only map vertices of priority $\leq \level(n)$ to $\scope(n)$, and vertices of priority $\leq \level(n)+1$ to $\Bset(n)$ or $\Aset(n)$. Note moreover that no vertex can be mapped to $\Gset(n)$, since this would require having priority $= \level(n^Q)$ and $\leq \level(n^\P)=\level(n^\Q) -1$.
\end{remark}

Given a subset $S \subseteq \lazi{\Tc^\Even} \times \lazi{\Tc^\Odd}$, we will informally say that one has ``emptied $S$'' if the considered pair of labellings $\mu$ maps no vertex to $S$. With this terminology, we may now describe the execution of the algorithm. 

\subsubsection*{Execution of the algorithm}

We present our algorithm recursively, the main recursive procedure being \ES$(n, \mu)$, which updates in place the pair of labellings $\mu=(\mu^\Even, \mu^\Odd)$ over vertices in $\mu^{-1}(\scope(n)).$ The purpose of a call to \ES$(n, \mu)$ is to empty $\scope(n)$.

\begin{figure}[h]
  \makebox[\textwidth][c]{\scalebox{.8}{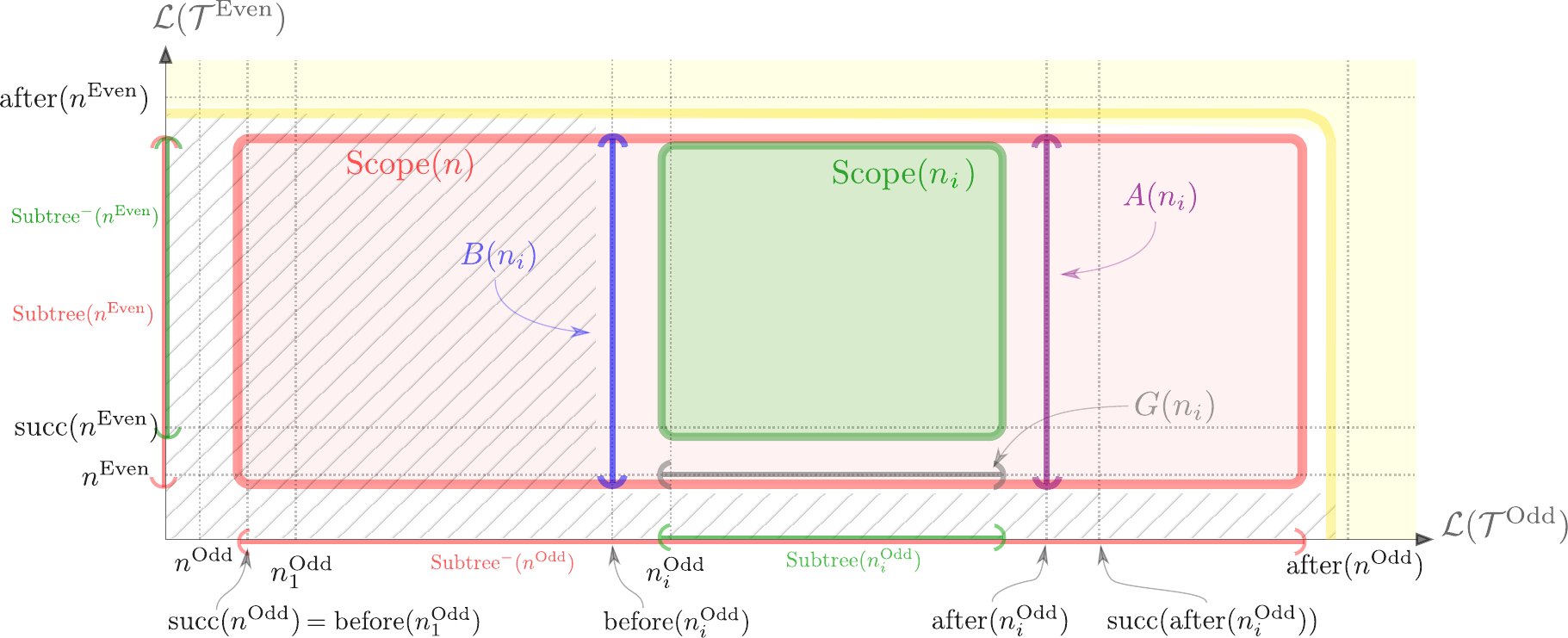}}%
\caption{Illustration of the $i$-th iteration of a call to \ES, in the case where $\P=\Even$. We assume (this is guaranteed over the course of the algorithm) that the hatched subset has no vertex mapped to it. Note that $\succ(\after(n^\Odd_i))$ is equal to $n^\Odd_{i+1}$ if $i<k$, and to $\after(n^\Odd)$ otherwise.
\label{fig:alg-sym}}
\end{figure}

Note that $\scope(n)$ is partitionned into 
\[
\scope(n)=\bigsqcup_{i=1}^k (\Gset(n_i) \sqcup \Bset(n_i) \sqcup \scope(n_i)) \sqcup \Aset(n_k),
\]
where $n_1, \dots, n_k$ are the children of $n$.

The algorithm will iteratively empty each of these subsets. As we have remarked, a labelling cannot map any vertex to $\Gset(n_i)$, so we shall successively empty $\Bset(n_i)$ and $\scope(n_i)$ for $i \in \{1, \dots, k\}$, and then finally $\Aset(n_k)$.

The key insight to understanding the algorithm is that emptying $\Bset(n_i)$ for $i \in \{1, \dots, k \}$ and $\Aset(n_k)$ can be performed efficientely, that is, by performing exactly one lift at each vertex that is mapped in the set. This is non-trivial and argued for in Subsection~\ref{subsec:pseudo-code-correctness}. Emptying $\scope(n_i)$ is handled by performing a recursive call.

In the algorithm we have described so far, there is no interaction between the two labellings. In other words, the procedure above can be stated as a very specific lifting strategy in the asymetric setting for either player. We now present what we consider as the strength of this symmetrical approach, namely acceleration.

This works as follows: at the start of a call to \ES$(n, \mu)$, if for either player $\R \in \{\Even, \Odd\}$, it holds that all vertices that lie in $\scope(n)$ are valid for $\R$, then we may immediately update the labelling of the opponent $\bar R$ to set all vertices from $\scope(n)$ to $\after(n^{\bar \R})$. Indeed, in a (non-accelerating) execution of \ES$(n, \mu)$, lifts would all be parallel to the $\bar \R$-axis by validity in $\R$, so the only way of emptying $\scope(n)$ is by updating (only) $\mu^{\bar \R}$ to positions $\geq \after(n^\R)$.

\subsection{Pseudo-code and proof of correctness}\label{subsec:pseudo-code-correctness}

We are now ready to expose some pseudo-code for Algorithm~\ref{alg:sym-main}.

\begin{algorithm}[htb]
\DontPrintSemicolon
  \SetKwFunction{emptyScope}{$\text{EmptyScope}$}
  \SetKwProg{fun}{procedure}{:}{}

  \fun{\emptyScope{$n, \mu$}}{
        $n \leftarrow (n^\P, n^\Q)$ \;
  		\If{all vertices in $\mu^{-1}(\scope(n))$ are valid in $\mu^\R$ for some $\R$\label{lin:ifacc}}{
			\ForAll{$v \in \mu^{-1}(\scope(n))$}{
				$\mu^{\bar \R}(v) \leftarrow \after(n^{\bar \R})$ \; 
			}
  		}
		\Else{\label{lin:else}
     		Let $n_1, \dots, n_k \leftarrow$ children of $n$ in $\Tc$ \;
  			\For{$i \leftarrow 1$ \KwTo $k$}{
				\While{there is $v \in \mu^{-1}(\Bset(n_i))$ such that $\dest{\mu}{v} \notin \Bset(n_i)$}{ \label{lin:while}
					Lift $v$ in $\mu$ to a position $\notin \Bset(n_i)$ \label{lin:ub}\;
				}
				\emptyScope$(n_i, \mu)$ \label{lin:recurse}\;
				}
			\While{there is $v \in \mu^{-1}(\Aset(n_k))$ such that $\dest{\mu}{v} \notin \Aset(n_k)$}{\label{lin:sndwhile}
				Lift $v$ in $\mu$ to a position $\notin \Aset(n_k)$ \label{lin:ua}\;
			}
  		}
  			
  	}
  
  \tcc{Main procedure:}
  Let $\mu^\Even,\mu^\Odd \leftarrow $ smallest labellings in $\Tc^\Even, \Tc^\Odd$ respectively \;
  Let $n \leftarrow $ root of $\Tc$ \;
  \emptyScope{$n, (\mu^\Even, \mu^\Odd)$}\;
  \Return{$(\mu^\Even, \mu^\Odd)$} \;
  \caption{The symmetric algorithm\label{alg:sym-main}}
\end{algorithm}

\subsubsection*{Correctness of the algorithm}

Theorem~\ref{thm:correctness} states that Algorithm 2 returns a pair  of labellings in which all vertices are mapped to one of the two $\top$ positions, while being smaller that the respective minimal embedded attractor-decompositions. A careful reader will observe that these two guarentees can be thought of as an strengthened version of the dominion separation theorem in~\cite{JM20}.

\begin{theorem}\label{thm:correctness}
Algorithm~\ref{alg:sym-main} returns a pair of labellings $\mu=(\mu^\Even, \mu^\Odd)$ such that
\begin{itemize}
\item for all vertices $v \in V$, either $\mu^\Even(v) = \top^\Even$ or $\mu^\Odd(v)=\top^\Odd$, and
\item for both $\R \in \{\Even, \Odd\}$, we have $\mu^\R \leq \lambda^\R$, where $\lambda^\R$ denotes the smallest $\R$-attractor decomposition of $\Gc$ embedded in $\Tc^\R$.
\end{itemize}
\end{theorem}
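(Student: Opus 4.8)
The plan is to prove a more refined invariant about calls to \ES and then derive the two bullets of Theorem~\ref{thm:correctness} as the special case $n = \root(\Tc)$. Concretely, I would show by induction on the depth of the recursion the following statement about a call \ES$(n, \mu)$: \emph{if} upon entry $\mu$ is a pair of labellings satisfying (i) all vertices of $\mu^{-1}(\scope(n))$ actually have priority $\leq \level(n)$ and are invalid only via positions inside $\scope(n)$ (i.e.\ the ``context'' outside $\scope(n)$ is already stabilised, as is guaranteed over the course of the algorithm — this is the hatched-region hypothesis of Figure~\ref{fig:alg-sym}), and (ii) $\mu^\R \leq \lambda^\R$ for both $\R$, \emph{then} upon exit $\mu$ has been modified only on $\mu^{-1}(\scope(n))$, satisfies $\mu^{-1}(\scope(n)) = \emptyset$ (the scope has been emptied), and still satisfies $\mu^\R \leq \lambda^\R$ for both $\R$. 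Applying this at the root (where $\scope(\root) = \lazi{\Tc^\Even} \times \lazi{\Tc^\Odd}$ minus the top and right borders, see Figure~\ref{fig:grid-lifts}) gives exactly: every $v$ ends with $\mu^\Even(v) = \top^\Even$ or $\mu^\Odd(v) = \top^\Odd$, and $\mu^\R \leq \lambda^\R$.

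For the inductive step I would split on the branch taken at line~\ref{lin:ifacc}. \textbf{Acceleration branch:} if all vertices of $\mu^{-1}(\scope(n))$ are valid in $\mu^\R$, then every such $v$ is invalid in $\mu$ only through $\mu^{\bar\R}$, so the only lifts available are parallel to the $\bar\R$-axis; since $v$'s priority is $\leq \level(n)$, validity in $\mu^\R$ and the structure of $\scope(n)$ force that the only way to leave $\scope(n)$ is to push $\mu^{\bar\R}(v)$ up to $\after(n^{\bar\R})$, and this is a legal sequence of lifts (it is $\leq \dest{\mu}{v}$ because no smaller value escapes $\scope(n)$). The monotonicity argument from Lemma~\ref{lem:ad-lifting-correct} — using Lemma~\ref{lem:monotonicity-of-destination} and the fact that $v$ is valid in $\lambda^{\bar\R}$ hence $\dest{\mu^{\bar\R}}{v} \leq \lambda^{\bar\R}(v)$ — shows $\mu^{\bar\R} \leq \lambda^{\bar\R}$ is preserved; $\mu^\R$ is untouched. \textbf{Recursive branch:} use the partition $\scope(n) = \bigsqcup_i (\Gset(n_i) \sqcup \Bset(n_i) \sqcup \scope(n_i)) \sqcup \Aset(n_k)$. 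No vertex ever sits in $\Gset(n_i)$. I would process the sets left to right exactly as the pseudo-code does: the \UB-style while loop at line~\ref{lin:while} empties $\Bset(n_i)$ — here the nontrivial claim, deferred to a separate lemma, is that this terminates with \emph{one lift per vertex} and re-establishes the entry hypothesis for the recursive call \ES$(n_i, \mu)$; by induction that call empties $\scope(n_i)$ and preserves $\mu^\R \leq \lambda^\R$; since emptying $\scope(n_i)$ can only move vertices into $\Aset(n_i) = \Bset(n_{i+1})$ or out of $\scope(n)$ entirely, the bookkeeping goes through, and finally the \UA-style while loop at line~\ref{lin:sndwhile} empties $\Aset(n_k)$. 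At every lift, $\dest{\mu}{v} \leq \dest{\lambda^\Even}{v} \times \dest{\lambda^\Odd}{v} = (\lambda^\Even(v), \lambda^\Odd(v))$ by Lemma~\ref{lem:monotonicity-of-destination} applied coordinatewise and validity of $v$ in both $\lambda^\R$, so lifting to any value $\leq \dest{\mu}{v}$ keeps $\mu \leq (\lambda^\Even, \lambda^\Odd)$.

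I also need termination: each lift strictly increases $\mu(v)$ in the finite poset $\lazi{\Tc^\Even} \times \lazi{\Tc^\Odd}$, so globally at most $|\Gc| \cdot |\lazi{\Tc^\Even}| \cdot |\lazi{\Tc^\Odd}|$ lifts occur and the recursion tree is finite (bounded by $|\Tc|$, the interleaving), hence the top-level call returns.

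The main obstacle I anticipate is exactly the claim that the while loops at lines~\ref{lin:while} and~\ref{lin:sndwhile} empty $\Bset(n_i)$ (resp.\ $\Aset(n_k)$) with a single lift per vertex and correctly re-establish the ``hatched region is empty / context stabilised'' precondition needed for the recursive call — this is precisely the ``key insight'' the authors flag as non-trivial. Making it rigorous requires analysing where a vertex currently in $\Bset(n_i)$ can go: its priority is $\leq \level(n)+1 = \level(n^\Q)$, and one must show that its destination, computed against a $\mu$ that is already emptied below and to the left, either stays in $\Bset(n_i)$ (in which case it is already valid there and is not touched) or jumps in one step out of $\Bset(n_i)$ to a position in $\scope(n_i) \cup (\text{outside } \scope(n))$, and that after this single lift it will never re-enter $\Bset(n_i)$ because all later modifications only increase labels and the region it could fall back into is being kept empty. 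I would isolate this as a standalone lemma about the \ES invariant before plugging it into the induction above.
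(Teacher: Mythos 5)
Your overall architecture coincides with the paper's: an inductive invariant on calls to \ES{} stating that if every vertex outside $\mu^{-1}(\scope(n))$ already sits at a position $\geq \after(n^\R)$ in one of the two coordinates, then the call empties $\scope(n)$ (this is exactly Lemma~\ref{lem:main-inductive}), applied at the root of the interleaving; and the bound $\mu^\R \leq \lambda^\R$ obtained from monotonicity of destinations as in Lemma~\ref{lem:ad-lifting-correct}. So the skeleton matches.

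There is, however, a genuine gap exactly where you flag one, and your sketch of how to fill it would not work. The claim that the while loop of line~\ref{lin:while} empties $\Bset(n_i)$ is not settled by tracking where a single vertex can go after one lift: the loop's exit condition only guarantees that every vertex \emph{remaining} in $\Bset(n_i)$ has destination inside $\Bset(n_i)$, and your case ``stays in $\Bset(n_i)$, in which case it is already valid there and is not touched'' is precisely the configuration that must be shown impossible, not accepted. The paper's argument is a proof by contradiction exploiting the interaction between the two labellings: if $\mu^{-1}(\Bset(n_i)) \neq \varnothing$ upon exit, each remaining vertex is valid in $\mu^\Q$ at the lazy position $\before(n_i^\Q)$, hence lies in the $\Q$-attractor to $(\mu^\Q)^{-1}(< \before(n_i^\Q))$ through $(\mu^\Q)^{-1}(\leq \before(n_i^\Q))$; following this attractor to its frontier yields a vertex of $\Bset(n_i)$ from which $\Q$ reaches in one step a vertex $u \notin \Bset(n_i)$ with $\mu^\Q(u) \leq \before(n_i^\Q)$, and the scope invariant forces $\mu^\P(u) \geq \after(n^\P)$; Proposition~\ref{prop:one-step-strat} then gives $\dest{\mu^\P}{v} \geq \after(n^\P)$, contradicting the exit condition. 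Nothing in your proposal supplies this attractor-frontier argument or the one-step proposition it rests on. Two secondary imprecisions: in the acceleration branch, the jump to $\after(n^{\bar \R})$ is in general \emph{not} bounded by the current $\dest{\mu^{\bar \R}}{v}$ (that destination may lie strictly inside $\scope(n)$), so it is not a legal lift; the correct justification is that a non-accelerated execution, which is a legal lifting sequence and hence stays $\leq \lambda^{\bar \R}$, would eventually place these vertices at positions $\geq \after(n^{\bar \R})$. And ``one lift per vertex'' is a complexity statement, not needed for correctness.
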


A detailed proof, which in spirit amounts to showing that $\Bset(n_i)$ and $\Aset(n_k)$ are indeed emptied respectively in lines~\ref{lin:while} and~\ref{lin:sndwhile}, is given in Appendix~\ref{app:correctness}.

\subsection{Complexity analysis}\label{subsec:complexity}

We now provide complexity upper bounds for Algorithm~\ref{alg:sym-main}. We implicitly assume that computing destinations and performing lifts is performed efficiently, hence only concentrate on bounding the total number of recursive calls.

\begin{theorem}
Algorithm~\ref{alg:sym-main} performs at most $O(|\G|d \min(|\Tc^\Even|, |\Tc^\Odd|))$ calls to \ES.
\end{theorem}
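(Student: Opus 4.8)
The plan is to read off the tree structure of the recursive calls and bound its number of nodes by a width‑times‑depth estimate. The recursive calls to \ES{} organize into a tree $T_{\mathrm{rec}}$: the main procedure calls $\ES(\root,\mu)$, a call $\ES(n,\mu)$ that takes the accelerating branch (line~\ref{lin:ifacc}) makes no recursive call, and a call that takes the \texttt{else} branch calls $\ES(n_i,\mu)$ exactly once for each child $n_i$ of $n$ in $\Tc$. Hence each node of $\Tc$ is the argument of at most one call, $T_{\mathrm{rec}}$ is a subtree of $\Tc$ obtained by pruning below every accelerating call, and the number of calls to \ES{} is $|V(T_{\mathrm{rec}})|$. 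Since the levels of nodes of $\Tc$ lie in $\{1,\dots,d\}$ and strictly decrease along edges, $T_{\mathrm{rec}}$ has depth at most $d$, and it suffices to bound, for each of the $\le d$ levels $\ell$, the number $N_\ell$ of nodes of $T_{\mathrm{rec}}$ at level $\ell$ by $O\bigl(|\G|\min(|\Tc^\Even|,|\Tc^\Odd|)\bigr)$.

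\textbf{Bounding the width of $T_{\mathrm{rec}}$.} The first ingredient is a disjointness property: the sets $\scope(n)$ attached to distinct nodes $n$ of $T_{\mathrm{rec}}$ at a fixed level are pairwise disjoint subsets of the grid $\lazi{\Tc^\Even}\times\lazi{\Tc^\Odd}$. This follows by induction on the level from the partition $\scope(n)=\bigsqcup_i\bigl(\Gset(n_i)\sqcup\Bset(n_i)\sqcup\scope(n_i)\bigr)\sqcup\Aset(n_k)$: sibling scopes are disjoint by construction, and they sit inside the (inductively) disjoint parent scopes. The second ingredient is that a call taking the \texttt{else} branch has $\mu^{-1}(\scope(n))\neq\varnothing$ (otherwise the guard on line~\ref{lin:ifacc} holds vacuously); combining this with the monotonicity invariants of the pair of labellings established for correctness (in particular that a scope, once emptied, stays empty, and that labellings only increase), one bounds the number of non-accelerating calls at a given level by $O(|\G|)$. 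Each accelerating call is a child of a non-accelerating one, and a non-accelerating call on $n=(n^\P,n^\Q)$ has exactly $|\,\text{children of }n^\Q\text{ in }\Tc^\Q\,|$ children; grouping non-accelerating calls by the component refined for their children — of which at most $O(|\G|)$ share a given value at a given level by disjointness — yields a per-level bound of the form $O\bigl(|\G|\cdot(\text{number of nodes of one of the two trees at that level})\bigr)$, and in total the (weaker) bound $O\bigl(|\G|(|\Tc^\Even|+|\Tc^\Odd|)\bigr)$ on the number of calls.

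\textbf{The main obstacle: replacing the sum by the minimum.} The hard part is to upgrade the per-level estimate so that it always refers to the \emph{smaller} of the two trees, giving $N_\ell=O\bigl(|\G|\cdot\text{leaves}(\Tc^{\min})\bigr)$ at every level and hence the claimed $O\bigl(|\G|d\min(|\Tc^\Even|,|\Tc^\Odd|)\bigr)$ after summing over $\le d$ levels. Say $\Tc^\Odd$ is the smaller tree. At a level where the refined component lives in $\Tc^\Odd$, the per-level bound above is already $O(|\G|\cdot\text{leaves}(\Tc^\Odd))$; the difficulty is at the levels where the potentially much larger tree $\Tc^\Even$ is refined, since there a single non-accelerating call can branch into very many children of its $\Tc^\Even$-component, almost all of them accelerating calls on an \emph{empty} scope. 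The key fact to establish is that acceleration on the Odd side keeps this wide $\Tc^\Even$-branching under control: such a branching is entered only inside a region of the grid that is already decided against Odd, so that the empty‑scope children can be amortized against the $O(|\G|)$ children that carry a game vertex together with the $O(|\G|d)$ internal calls (exploiting the lock‑step coupling of the two explorations on the interleaving $\Tc$, and the symmetry of the construction allowing one to run the count from the side of whichever tree is smaller). I expect this amortization of vacuous recursive calls, ruling out that a lopsided pair of trees triggers $\Theta(|\G||\Tc^\Even|)$ useless calls, to be the crux of the proof; the remaining steps — the disjointness induction, the non-accelerating per-level bound, and the final summation over levels — are routine once the relevant invariants from the correctness proof are in hand.
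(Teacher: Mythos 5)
Your proposal is incomplete at exactly the point where the theorem's content lies, and the step you do carry out is not sound as stated. First, the per-level bound of $O(|\G|)$ non-accelerating calls does not follow from disjointness of scopes: the sets $\scope(n)$ for distinct level-$\ell$ nodes of the recursion tree are indeed pairwise disjoint subsets of the grid, but the pair of labellings evolves over time, so a single vertex can witness non-emptiness of many level-$\ell$ scopes in succession as it is lifted. The honest count obtainable from your argument is $|\G|$ plus the total number of lifts, i.e.\ $O(|\G|(|\Tc^\Even|+|\Tc^\Odd|))$ per level, which reintroduces the tree sizes and, once multiplied by the branching degree to account for accelerating children, does not even recover the sum bound you claim as an intermediate step. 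Second, and more importantly, you explicitly defer the upgrade from the sum $|\Tc^\Even|+|\Tc^\Odd|$ to the minimum---the amortization of the vacuous accelerating calls produced when the larger tree is branched---as ``the crux'', without giving an argument. Since that is precisely what separates the stated bound from the trivial one obtained by running two independent asymmetric liftings and stopping when the first terminates, the proposal does not constitute a proof.

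The paper's argument is structured quite differently and avoids the recursion-tree width analysis altogether: it is a temporal amortization over the sequence of calls. One first observes that there can be at most $d\delta$ consecutive accelerating calls (where $\delta$ bounds the degree of $\Tc$), because after an accelerating call at $n$ the next call is at the smallest node that is not a descendant of $n$. One then shows that after any non-accelerating call, within the next $O(d\delta)$ calls a lift is performed in \emph{both} labellings: either all children calls accelerate, which forces an update for each player, or some child call is non-accelerating at a strictly smaller level and one descends, and this can happen at most $d$ times. Hence every window of $O(d\delta)$ consecutive calls contains a lift in each labelling---in particular in the labelling over the smaller of the two trees---and the total number of lifts in that labelling is at most $|\G|\cdot|\lazi{\Tc^{\R}}|$ for the player $\R$ with the smaller tree. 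The $\min$ thus comes for free from measuring progress in both labellings simultaneously and charging every call to a lift in whichever labelling admits fewer of them; this is the idea missing from your write-up.
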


\begin{proof}[Sketch]
We say that a call to \ES$(\mu,n)$ is accelerating if it holds that all vertices in $\mu^{-1}(\scope(n))$ are valid for one of the player. We make the two following observations: first, there may be atmost $|\G|d$ successive accelerating calls. Second, if a call is non-accelerating, then either a lift is performed \emph{in both labellings} in the $|\Gc|$ following recursive calls, or, one of the following $|\Gc|$ recursive calls is non-accelerating and at a smaller level. These two facts together imply the wanted upper bound. See Appendix~\ref{app:complexity} for a full proof.
\end{proof}

\section{Relations to attractor-based algorithms}\label{sec:attractor-based}

We now introduce a variant of Algorithm~\ref{alg:sym-main}, which exactly simulates the universal attractor decomposition algorithm. Subsection~\ref{subsec:resets} introduces the variant, and Subsection~\ref{subsec:simul} formally proves equivalence with known algorithms.

\subsection{Performing short-lifts and resets}\label{subsec:resets}

We will now present a determinisic variant of Algorithm~\ref{alg:sym-main} which is tailored to simulate the universal attractor decomposition algorithm from~\cite{JM20}. First, we explain how to perform so called ``short-lifts'', to resolve the non-determinism of Algorithm~\ref{alg:sym-main}. Then, we introduce resets, which are key to rigorously capture attractor computations. We describe the variant informally, a complete pseudo-code is given in Appendix~\ref{app:equivalence}.

Note that the order in the choice of $v$ in the while-loops, as well as the precise arrival of the lifts, can be chosen arbitrarily in Algorithm~\ref{alg:sym-main}. What we now describe as ``performing short-lifts'' is a way of resolving this non-determinism. In both while-loops (lines~\ref{lin:while} and~\ref{lin:sndwhile}), we prioritise vertices $v$ such that $\dest{\mu^\P}{v} \geq \after(n^\P)$, which we lift in $\mu^\P$ to $\after(n^\P)$. Once remaining vertices $v$ satisfy $\dest{\mu^\P}{v} < \after(n^P)$, we proceed to lift them in $\mu^\Q$. In the while-loop of line~\ref{lin:sndwhile}, remaining vertices are now lifted in $\mu^\Q$ to $\after(n^\Q)$. In the while-loop of line~\ref{lin:while}, we first lift those that have destination $\geq \after(n^\Q)$ are to $\after(n^\Q)$, and then the remaining vertices are lifted in $\mu^\Q$ to the smallest possible position: $n_i^\Q$ for those of priority $\level(n_i^\Q)$, and $\succ(n_i^\Q)$ for those of smaller priority.

In contrast to short-lifts, resets are not prescribed by asymmetric attractor decomposition lifting algorithms as defined in Section~\ref{sec:asymmetric}. This operation, which is artificial in the context of lifting algorithms, reverts the labelling of some well chosen vertices to smaller positions. Formally, let $\mu_\out$ be the pair of labellings obtained at the very end of a call to \ES$(n,\mu_\inbis)$, after the while loop of line~\ref{lin:sndwhile}. Then when adding resets to Algorithm~\ref{alg:sym-main}, we revert the value of $\mu_\out^\Q$ to $\succ(n^\Q)$ for all vertices mapped to $\{\after(n^\P)\} \times_\P \subtreem(n^\Q)$. Figure~\ref{fig:alg-sym-with-resets}, in Appendix~\ref{app:equivalence} illustrates the variant (including resets).

We state correctness of the algorithm with short-lifts and resets.

\begin{lemma}
The algorithm with short-lifts and resets terminates with a pair of labellings $\mu = (\mu^\Even, \mu^\Odd)$ such that
\begin{itemize}
\item for all vertices $v \in V$, either $\mu^\Even(v) = \top^\Even$ or $\mu^\Odd(v) = \top^\Odd$, and
\item for both $\R \in \{\Even, \Odd\}$, we have $\mu^\R \leq \lambda^\R$, where $\lambda^\R$ denotes the smallest $\R$-attractor decomposition of $\Gc$ embedded in $\Tc^\R$.
\end{itemize}
\end{lemma}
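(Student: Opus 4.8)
The strategy is to obtain both conclusions as corollaries of the correctness of Algorithm~\ref{alg:sym-main} (Theorem~\ref{thm:correctness}) together with an invariant argument showing that neither short-lifts nor resets break the two guarantees. I would set up the proof by first establishing that the two properties in the statement are maintained as invariants throughout the execution, then arguing that the short-lift/reset variant still terminates (with the same reasoning as the complexity argument, since short-lifts only resolve nondeterminism and each reset is immediately "undone" by the recursive structure).

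First I would observe that short-lifts are simply a particular choice of the arbitrary lift targets allowed in Algorithm~\ref{alg:sym-main}: each short-lift moves $\mu(v)$ to a strictly greater position that is $\leq \dest{\mu}{v}$, and leaves the set $\Bset(n_i)$ (resp.\ $\Aset(n_k)$) if and only if the ordinary algorithm would do so, so the while-loop termination conditions of lines~\ref{lin:while} and~\ref{lin:sndwhile} behave identically. Hence the entire analysis of Section~\ref{sec:symmetric} (and in particular Theorem~\ref{thm:correctness}) applies verbatim to the short-lift-only variant. The upshot is that, disregarding resets, the variant already outputs a pair $\mu = (\mu^\Even, \mu^\Odd)$ satisfying both bullet points.

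Next I would handle resets. The key claim is that performing a reset at the end of \ES$(n, \mu_\inbis)$ preserves the two invariants, i.e.\ (a) the reset only reverts values of vertices $v$ with $\mu_\out(v) \in \{\after(n^\P)\} \times_\P \subtreem(n^\Q)$ to $\mu^\Q(v) \leftarrow \succ(n^\Q)$, and these vertices still satisfy $\mu^\Even(v) = \top^\Even$ or $\mu^\Odd(v) = \top^\Odd$ afterwards, and (b) after the reset $\mu^\R \leq \lambda^\R$ still holds. For (b), since reset only \emph{decreases} $\mu^\Q$, the upper bound $\mu^\Q \leq \lambda^\Q$ is trivially maintained, and $\mu^\P$ is untouched. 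For (a), one must check that the vertices being reset were never destined for the $\P$-side: in the grid picture (Figure~\ref{fig:grid-lifts}), a vertex with $\mu_\out^\P(v) = \after(n^\P)$ that gets reset to $\mu^\Q(v) = \succ(n^\Q)$ has its $\P$-coordinate already at $\after(n^\P)$; I would argue that by the analysis of the enclosing recursive call, such a vertex will subsequently be lifted in $\mu^\P$ past $\after(n^\P)$ along with all of $\scope(n)$ until it reaches $\top^\P$, OR will reach $\top^\Q$ inside the enclosing scope, because after the reset it re-enters $\Bset$ of a child of the enclosing node and is thus handled by a future while-loop. This monotone-progress argument, essentially that a reset places $v$ back in a set that the algorithm is guaranteed to re-empty (by the proof of Theorem~\ref{thm:correctness} applied to the parent call), is what I expect to be the main obstacle; it is exactly the place where the ``deliberate set-back'' must be shown not to cause an infinite loop or a violated invariant.

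Finally, termination of the variant: each reset is performed exactly once per completed recursive call \ES$(n, \cdot)$, and the post-reset state forces the parent call's subsequent iterations to re-process those vertices, but always strictly increasing their $\mu^\P$-component (which is never reset below $\after(n^\P)$ for these vertices in the parent's frame, since the parent works with $\subtree((\root^\P \text{ or ancestor}))$), so a potential-function argument on the pair $(\mu^\Even, \mu^\Odd)$ restricted to components that are never reset gives termination. Combining: the variant terminates, and by the invariant maintenance both bullet points hold at termination, as in the statement. I would defer the detailed bookkeeping of which vertices are reset in which frame, and the precise potential function, to the appendix, mirroring the structure used for Theorem~\ref{thm:correctness}.
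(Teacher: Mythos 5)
Your proposal follows essentially the same route as the paper: short-lifts are legal instances of the nondeterministic lifts of Algorithm~\ref{alg:sym-main}, so the correctness analysis (in particular the inductive invariant of Lemma~\ref{lem:main-inductive}) carries over verbatim, and resets only decrease $\mu^\Q$ while leaving the reset vertices at $\after(n^\P)$ in $\mu^\P$, so both the bound $\mu^\R \leq \lambda^\R$ and the invariant are preserved. The one place you overcomplicate matters is termination: no potential function is needed, since the recursion structure is statically fixed by the interleaving tree $\Tc$ (a completed call is never revisited) and each while-loop strictly shrinks a finite set, which is exactly how the paper obtains its $O(|\Tc^\Even||\Tc^\Odd|)$ bound on the number of calls.
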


\begin{proof}
The second item is clear since performing short-lifts and resets also preserve being smaller that the smallest embedded attractor decompositions for either player. For the first item, the proof of Lemma~\ref{lem:main-inductive}, in Appendix~\ref{app:correctness} can be read also in this context: short-lifts are formally authorized by Algorithm~\ref{alg:sym-main}, and it is straightforward that resets do not break the invariant stated in Lemma~\ref{lem:main-inductive}.
\end{proof}

As we shall argue in the next subsection, the algorithm with short-lift and resets simulates the universal attractor decomposition algorithm~\cite{JM20}. It is worth noting that the main technical and conceptual contribution of~\cite{JM20}, namely, the \emph{dominion separation theorem}, which provides a precise inductive statement for the proof of the universal attractor decomposition theorem, is a corollary of Lemma~\ref{lem:main-inductive}.

\begin{remark}
Note however that we lose the complexity upper bound when performing resets (it still holds however if only short-lifts are performed, since these are formally allowed by the semantics of Algorithm~\ref{alg:sym-main}). When performing short-lifts and resets, the best upper bound we provide on the number of calls to \ES~is given by the size of the interleaving $\Tc$, which is $O(|\Tc^\Even||\Tc^\Odd|)$.
\end{remark}
%
%
%
%
%

\subsection{Simulating attractor-based algorithms}\label{subsec:simul}

We are now ready to argue that the variant of Algorithm~\ref{alg:sym-main} with short-lifts and resets behaves just like the universal attractor decomposition algorithm. Intuitively, when performing short-lifts and reset, the positions of the vertices in the scope are constrained to some very precise positions (see Figure~\ref{fig:alg-sym-with-resets}), and the subsets computed in the universal attractor decomposition correspond to the inverse image by $\mu$ of these positions. See the proof of Lemma~\ref{lem:inductive-eq} in Appendix~\ref{app:equivalence} for full details. We refer to the pseudo-code given in the Appendix for the universal attractor decomposition algorithm (Algorithm~\ref{alg:uad}).

\begin{theorem}\label{thm:eq}
Let $\Gc$ be a parity game with priorities in $\{1, \dots, d\}$, and $\Tc^\Even$ and $\Tc^\Odd$ be an Even tree of height $d$ and an Odd tree of height $d+1$ respectively. Let $\mu_\out$ and $D$ be the respective outputs of Algorithm~\ref{alg:variant} and Algorithm~\ref{alg:uad} when ran on $\Gc, \Tc^\Even$ and $\Tc^\Odd$, and $t,t'$ their respective total number of recursive calls. Then
\begin{itemize}
\item $\mu_\out^\Odd$ maps all vertices of $D$ to $\top^\Odd$ and $\mu_\out^\Even$ maps all vertices of $\Gc \setminus D$ to $\top^\Even,$ and
\item $t \leq t' \leq (\delta+1)t$, where $\delta$ is an upper bound to the degrees of $\Tc^\Even$ and $\Tc^\Odd$.
\end{itemize}
\end{theorem}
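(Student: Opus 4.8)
The plan is to prove Theorem~\ref{thm:eq} by establishing a tight lock-step correspondence between recursive calls of Algorithm~\ref{alg:variant} (the short-lifts-and-resets variant) and recursive calls of Algorithm~\ref{alg:uad} (the universal attractor decomposition algorithm), via an inductive invariant on the pair of labellings maintained by the former. Concretely, I would formulate a Lemma (``\ref{lem:inductive-eq}'') stating: when \ES$(n,\mu_\inbis)$ is called at a node $n=(n^\P,n^\Q)$ of the interleaving $\Tc$, and the corresponding recursive call of Algorithm~\ref{alg:uad} is made on a subgame $\Gc'$, then (i) $\Gc'$ equals the set of vertices $v$ with $\mu_\inbis(v) \in \scope(n)$ — i.e., the vertices ``currently in scope'' — and (ii) upon return, the output $\mu_\out$ restricted to these vertices encodes the attractor decomposition computed by the recursive call, in the precise sense that the set $D'$ returned by Algorithm~\ref{alg:uad} is exactly $\{v : \mu_\out^\Q(v) \geq \after(n^\Q)\}$ (the vertices pushed past the $\Q$-border), and its complement within $\Gc'$ is $\{v : \mu_\out^\P(v) \geq \after(n^\P)\}$. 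The short-lifts ensure that vertices in scope land on the ``canonical'' positions depicted in Figure~\ref{fig:alg-sym-with-resets}, and the resets ensure that after a recursive call returns, the $\Q$-coordinate of vertices that were \emph{not} attracted is reset to $\succ(n^\Q)$, which is exactly what lets the next iteration's $\Bset(n_{i+1})$-emptying phase mirror the next attractor computation in Algorithm~\ref{alg:uad}.

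The induction is on the level (equivalently the height in $\Tc$), with base case the leaves, where both algorithms do trivial work on a game with only the relevant extreme priority. For the inductive step, I would trace one call to \ES$(n,\mu)$ against the body of the matching recursive call in Algorithm~\ref{alg:uad}: the \texttt{while}-loop of line~\ref{lin:while} combined with short-lifts corresponds precisely to computing the $\bar\P$-attractor of the ``top slice'' and removing it (vertices with $\dest{\mu^\P}{v} \geq \after(n^\P)$ get lifted to $\after(n^\P)$, matching attraction out of the subgame; the rest get lifted minimally into $\subtree(n_i)$, matching ``remains in the current subgame''); the recursive call on line~\ref{lin:recurse} invokes \ES$(n_i,\mu)$ which by induction mirrors the $i$-th recursive call of Algorithm~\ref{alg:uad}; and the reset after line~\ref{lin:sndwhile} undoes the $\Q$-lifts on non-attracted vertices so that the loop index $i \to i+1$ corresponds to peeling off the next $\Q$-child. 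The acceleration branch (line~\ref{lin:ifacc}) corresponds to a recursive call in Algorithm~\ref{alg:uad} that returns immediately (or returns the full/empty set) because one player already wins everywhere in the subgame — here I would check that ``all vertices in $\mu^{-1}(\scope(n))$ valid for $\R$'' is equivalent to the condition triggering the fixpoint-stabilization / trivial-return in Algorithm~\ref{alg:uad}. Once the invariant is in place, the first bullet of the Theorem is read off at the root: $D$ corresponds to $\{v : \mu_\out^\Odd(v) \geq \after(\root^\Odd) = \top^\Odd\}$ and $\Gc \setminus D$ to $\{v : \mu_\out^\Even(v) = \top^\Even\}$, and Theorem~\ref{thm:correctness} (the disjointness/smallness guarantee, which still holds by the preceding Lemma) ensures these two sets partition $V$ consistently.

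For the runtime comparison $t \leq t' \leq (\delta+1)t$, I would set up a map from recursive calls of one algorithm to the other. Each non-accelerating call to \ES matches exactly one recursive call of Algorithm~\ref{alg:uad} at the same node of the relevant tree, giving an injection that yields $t \leq t'$ modulo accounting for accelerating calls — and an accelerating \ES call corresponds to a recursive call in Algorithm~\ref{alg:uad} whose body does no further recursion, so it is charged one unit on each side. For the other direction, a single recursive call of Algorithm~\ref{alg:uad} at a node with $k \leq \delta$ children spawns $k$ sub-recursions in a loop, and in the variant these are interleaved with the at most $k \leq \delta$ \texttt{while}-loop phases that do not themselves recurse; so each Algorithm~\ref{alg:uad} call is ``covered'' by at most $\delta+1$ consecutive \ES-related calls, giving $t' \leq (\delta+1)t$. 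The main obstacle, I expect, is pinning down the invariant precisely enough that the resets and short-lifts genuinely reproduce Algorithm~\ref{alg:uad}'s \emph{exact} subgame sequence rather than merely a game with the same winning sets: one must verify that a vertex's $\P$- and $\Q$-coordinates, after a recursive return and before the reset, together determine membership in every intermediate subset computed by Algorithm~\ref{alg:uad}, and that the reset restores exactly the configuration the next loop iteration expects — this is the content deferred to Appendix~\ref{app:equivalence} and is where the bookkeeping is genuinely delicate, because it requires simultaneously tracking both labellings and the acceleration shortcut without any slack.
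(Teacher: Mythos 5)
Your overall strategy is essentially the paper's own. The paper proves exactly the inductive lemma you describe (Lemma~\ref{lem:inductive-eq}), by induction on the level $h$ of $n$, with incoming hypotheses pinning $\mu_\inbis$ to the canonical positions of Figure~\ref{fig:alg-sym-with-resets} and with $\Gc' = \mu_\inbis^{-1}(\scope(n))$; the inductive step identifies each intermediate set $\Gc_i, A_i, \Gc'_i, U_i, B_i$ computed by Algorithm~\ref{alg:uad} with the inverse image of specific grid positions at specific moments of the call to \ES, using Proposition~\ref{prop:one-step-strat} to get both inclusions of each attractor identification. The treatment of the acceleration branch, the role of the resets in restoring the $\Q$-coordinate to $\succ(n^\Q)$ for the next iteration, and the read-off of the first bullet at the root all match the paper.

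The one place your argument goes wrong is the accounting for $t \leq t' \leq (\delta+1)t$. You claim that an accelerating call to \ES{} ``is charged one unit on each side'' and then try to recover the factor $\delta+1$ from the while-loop phases; but the while-loops are not recursive calls in either algorithm and contribute nothing to $t$ or $t'$, so this accounting would in fact yield $t = t'$ and leave the factor unexplained. The actual source of the slack is the opposite of what you assert about acceleration: when \ES{} accelerates at $n$ (one call, no recursion), the matching call to \texttt{SolveP} still executes its for-loop and performs $k \leq \delta$ recursive calls on empty subgames --- the paper shows that in the accelerating case $\Attr{\Gc'}{\P}{\pi^{-1}(h) \cap \Gc'} = \Gc'$, so every $U_i$ is empty and every one of the $k$ recursive calls is trivial but still counted, giving $t = 1$ against $t' = k+1 \leq \delta+1$ locally, while in the non-accelerating case the recursive calls match one-to-one. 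With that correction your plan coincides with the paper's proof.
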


\section{Conclusion and perspective}

We have introduced a novel symmetric lifting algorithm which, when diminished with repetitive loss of information, captures the generic attractor-based algorithm of~\cite{JM20}. This formally reconciles the two long-standing families of progress measure lifting and attractor-based algorithms, which together include all known quasi-polynomial algorithms for solving parity games to date. We now comment on the un-diminished version of Algorithm~\ref{alg:sym-main}, which we believe opens exciting opportunities for furthur research.

First, for its practical applicability. Lehtinen and Boker~\cite{Leh18,LB19} have conjectured that relevant classes of parity games encountered in practice have small register-number. Building on this work, Daviaud, Jurdzi\'nski and Thejaswini~\cite{DJT20} have shown that register-number and Strahler number coincide, while exhibiting succinct trees which are Strahler-universal. This motivates a promising program of solving parity games by iteratively running algorithms parametrized with succinct trees which are universal for small parameters. Algorithm~\ref{alg:sym-main} is particularly fit for this use, as it inherits the practical efficiency of attractor-based algorithms, while producing labellings---with separation guarantees---which in turn can be fed to successive calls.

Second, for its theoretical aspects. Though crucial to justifying the pertinence of Algorithm~\ref{alg:sym-main}, the upper-bound in roughly $\min(\Tc^\Even, \Tc^\Odd)$ provided in~\ref{subsec:complexity} somehow falls short of effectively accounting for the recurrent interaction between the two labellings. Indeed, running two lifting algorithms, one for each player, independently from each other, and stopping whenever one of the two iterations terminate, yields the same complexity upper bound. We believe that the symmetric and recursive nature of Algorithm~\ref{alg:sym-main}, eminently inspired by the attractor-based paradigm, might lead to a provable gain in the worst-case complexity.

 \newpage

\subsubsection*{Acknowledgements}

We express our most sincere gratitude to Nathana\"el Fijalkow and Olivier Serre for their careful proofreading and valuable suggestions.

 \bibliographystyle{splncs04}
 \bibliography{bib}

\newpage

\appendix

\section*{Appendix}

\section{Attractor decomposition lifting is a progress measure lifting} \label{app:progress-measure}

This appendix provides an equivalent way of constructing the lazification $\lazi{\Tc}$ of a tree $\Tc$, but in the vocabulary of~\cite{CF19} and~\cite{FGO20}. Intuitively, this corresponds to the construction of~\cite{CF18}, with additional components for computing attractors.

This allows to formally describe attractor decomposition lifing algorithms, and by extension, our symmetric algorithm, Algorithm~\ref{alg:sym-main}, in the well known context of progress measure lifting algorithms. Adapting to the vocabulary of universal graphs, we now describe parity games labelling the edges with priorities.

We give a construction, which given an Even tree $\Tc$ of height $d/2$ and an integer $N$, provides a linear graph $\graph{N}{\Even}{\Tc}$, with the following features:
\begin{itemize}
\item if $\Tc$ is an $N$-universal tree, then $\graph{N}{\Even}{\Tc} $ is an $N$-universal graph,
\item the number of vertices of $\graph{N}{\Even}{\Tc}$ is bounded by $2N|\Tc|$,
\item a labelling $\mu: \Gc \to \lazi{\Tc}$ can be converted to a graph homomorphism $\phi: \Gc \to \graph{|\Gc|}{\Even}{\Tc}$, and vice-versa,
\item validity (from Definition~\ref{def:validEdge}) of $\mu$ as and validity of $\phi$ (as a progress measure as in the sense of~\cite{FGO20}) are equivalent.
\end{itemize}

The construction is given by induction on the even integer $d$. If $d=0$, then $\Tc$ is reduced to its root, and we let $\graph{N}{\Even}{\Tc}$ be the graph with a single vertex, and a loop labelled by 0. Assume $d \geq 2$, and assume constructed $\graph{N}{\Even}{\Tc_1}, \graph{N}{\Even}{\Tc_k}$, where $\Tc_1, \dots, \Tc_k$ are the Even trees of height $d-2$ obtained respectively by restrincting $\Tc$ to the  vertices reachable from $n_1, \dots n_k$, the children of the root in this order. We describe $\graph{N}{\Even}{\Tc}$, which is ilustrated in Figure~\ref{fig:graph}, by the following:
\begin{itemize}
\item vertices of $\graph{N}{\Even}{\Tc}$ are the disjoint union of vertices in $\graph{N}{\Even}{\Tc_i}$ for $i \in \{1, \dots, k\}$, together with $(k+1) N$ additional fresh vertices $(0,0), \dots, (0,N-1), (1,0), \dots, (1,N-1), \dots, (k,0), \dots, (k, N-1)$,
\item all pairs of vertices in $\graph{N}{\Even}{\Tc}$ are connected by an edge labelled by $d$, and all edges from $\graph{N}{\Even}{\Tc_i}$, for $i \in \{1, \dots, k\}$ appear in the corresponding copy
\item there are edges labelled by all $h \in \{1, \dots, d-1\}$ from $u$ to $v$ if $\rk{d}{u} > \rk{d}{v}$, where $\rk{d}{\cdot}$ is $2i+1$ for a vertex of the form $(i,l)$, and $2i$ for a vertex in $\graph{N}{\Even}{\Tc_i}$.
\end{itemize}

\begin{figure}[h]
  \makebox[\textwidth][c]{\scalebox{.8}{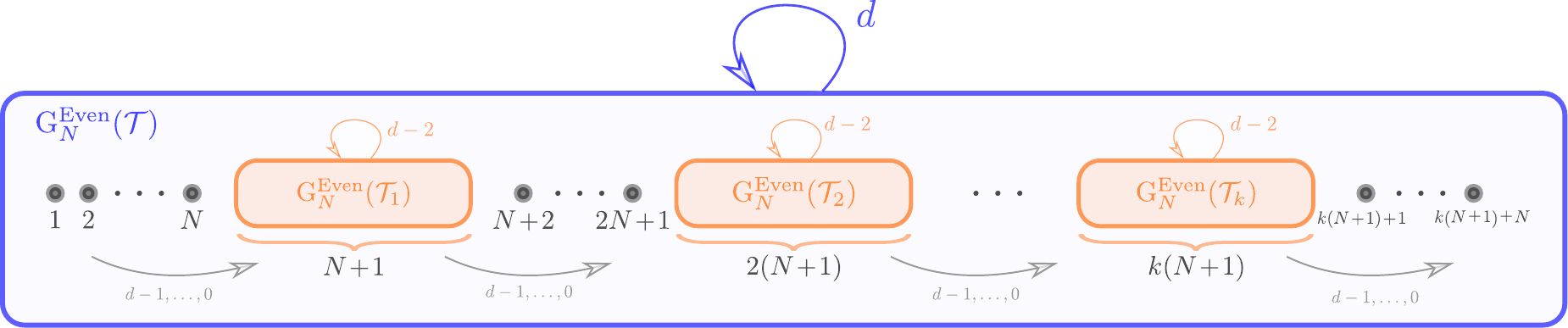}}%
\caption{Depiction of the construction $\graph{N}{\Even}{\Tc}$. The rank is displayed below each vertex. All vertices are connected by an edge of priority $d$ (in blue), and edges of priority $0, \dots, d-1$ follow strict increases in the rank. Remaining edges (among which the orange ones) are defined inductively in each copy $\graph{N}{\Even}{\Tc_i}$. \label{fig:graph}}
\end{figure}

\section{Correctness of Algorithm~\ref{alg:sym-main} (Theorem~\ref{thm:correctness})} \label{app:correctness}

The second item in the Theorem is clear: if it were not for acceleration, this would be exactly the same proof (for each player) as that of~\ref{lem:ad-lifting-correct}. Moreover, it is straightforward that acceleration can only produce a smaller pair of labellings, since vertices are updated (only) in $\mu^{\bar \R}$ to $\after(n^{\bar \R})$ when accelerating, while they would be updated to a position $\geq  \after(n^{\bar \R})$ if we were not accelerating.

Hence we focus on the first item. The following Lemma is our main inductive hypothesis, which formalizes the following: under the right inductive invariant, a call to \ES$(n, \mu)$ empties $\scope(n)$.

\begin{lemma}\label{lem:main-inductive}
Let $\mu_\inbis$ be a pair of labellings, $n \in \Tc$, and let $V_\inbis = \mu_\inbis^{-1}(\scope(n))$. We assume that all vertices $v \notin V_\inbis$ satisfy $\mu_\inbis^\R(v) \geq \after(n^\R)$ for some $\R$. Let $\mu_\out$ be the labelling obtained after an execution of \ES$(n, \mu)$.\emph{ Then it holds that $\mu_\out^{-1}(\scope(n)) = \varnothing$.}
\end{lemma}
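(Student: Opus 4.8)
The plan is to prove the statement by induction on the level of $n$ in the interleaving $\Tc$, following the recursive structure of the procedure \ES. At the base, if $n$ is a leaf of $\Tc$, then $n^\Q$ is a leaf of $\Tc^\Q$, so $\subtreem(n^\Q) = \varnothing$ and $\scope(n) = \varnothing$ trivially; there is nothing to empty. For the inductive step, I would distinguish the two branches of the pseudo-code. If the acceleration test on line~\ref{lin:ifacc} succeeds for some player $\R$, then I need to argue that resetting $\mu^{\bar\R}(v)$ to $\after(n^{\bar\R})$ for all $v \in V_\inbis$ (i) yields a valid pair of labellings and (ii) empties $\scope(n)$. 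Emptiness is immediate: after the update every vertex of $V_\inbis$ has $\mu^{\bar\R}$-coordinate equal to $\after(n^{\bar\R})$, which lies outside $\subtreem(n^{\bar\R})$ hence outside $\scope(n)$; no vertex outside $V_\inbis$ was mapped into $\scope(n)$ to begin with. Validity in $\mu^\R$ is untouched, and validity in $\mu^{\bar\R}$ for the updated vertices follows because they were all valid in $\mu^\R$ (the acceleration hypothesis) — I would invoke the fact, already used informally in the paper, that once all of $\scope(n)$ is valid for $\R$ the only possible lifts in a non-accelerating run would be along the $\bar\R$-axis up to at least $\after(n^\R)$, so moving $\mu^{\bar\R}$ up to $\after(n^{\bar\R})$ is dominated by a sequence of legitimate lifts and therefore preserves validity (this is essentially Proposition~\ref{prop:small-technical} applied along the way, plus the structure of $\lazi{\Tc}$).

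For the non-accelerating branch (line~\ref{lin:else}), I would use the partition
\[
\scope(n)=\bigsqcup_{i=1}^k \bigl(\Gset(n_i) \sqcup \Bset(n_i) \sqcup \scope(n_i)\bigr) \sqcup \Aset(n_k)
\]
and track the invariant ``all vertices not currently in $\scope(n)$ satisfy $\mu^\R(v) \geq \after(n^\R)$ for some $\R$'' through the loop, strengthened so that after processing index $i$ the sets $\Bset(n_1),\dots,\Bset(n_i)$, $\scope(n_1),\dots,\scope(n_i)$ are all empty, and no vertex has been mapped back into them. The crucial sub-claims are: (a) the while-loop on line~\ref{lin:while} terminates having emptied $\Bset(n_i)$ — this is where one shows that every vertex currently in $\Bset(n_i)$ can be lifted to a position outside $\Bset(n_i)$, i.e.\ its destination is never stuck inside $\Bset(n_i)$, and that such lifts never repopulate an already-emptied region; (b) immediately before the recursive call \ES$(n_i,\mu)$ the inductive hypothesis of the Lemma is satisfied for $n_i$, namely every vertex outside $\mu^{-1}(\scope(n_i))$ has some coordinate $\geq \after(n_i^\R)$ — this must be checked using that $\Bset(n_i)$ and the earlier scopes are empty and that $\Gset(n_i)$ is always empty for parity reasons; (c) by induction, the recursive call empties $\scope(n_i)$; and finally (d) the while-loop on line~\ref{lin:sndwhile} empties $\Aset(n_k)$ by the same reasoning as (a). Chaining these, after the loop every piece of the partition is empty, so $\mu_\out^{-1}(\scope(n)) = \varnothing$.

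The main obstacle is sub-claim (a) (and its twin (d)): showing that the while-loops genuinely terminate having emptied $\Bset(n_i)$ resp.\ $\Aset(n_k)$, rather than merely shuffling vertices around inside those sets. Concretely one must show that for any vertex $v$ with $\mu(v) \in \Bset(n_i)$ and $\dest{\mu}{v} \in \Bset(n_i)$, there is nothing preventing us from instead choosing a lift that leaves $\Bset(n_i)$ — which requires understanding the shape of destinations relative to the thin strip $\Bset(n_i) = \{\before(n_i^\P)\} \times_\P \subtree(n_i^\Q)$, and in particular that a vertex cannot be ``trapped'' there by validity constraints coming from other vertices in $\Bset(n_i)$. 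This is an argument about attractor/reachability strategies on the safe set $\mu^{-1}(\leq p)$ for lazy positions $p$, and it is the technical heart of the proof; the paper defers it to Appendix~\ref{app:correctness}, and I would expect to spend most of the effort there, with the bookkeeping invariant (b)–(c) being comparatively routine once (a) is in place.
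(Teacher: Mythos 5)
Your overall architecture --- induction following the recursive structure of \ES, the partition of $\scope(n)$, immediate emptiness in the accelerating branch, the observation that emptied regions cannot be repopulated because lifts are monotone, and the reduction of everything to showing that the two while-loops genuinely empty $\Bset(n_i)$ and $\Aset(n_k)$ --- coincides with the paper's proof in Appendix~\ref{app:correctness}. (The paper phrases the running invariant as: every $v$ still in the scope at the start of iteration $i$ satisfies $\mu^\Q(v) \geq \before(n_i^\Q)$, and at the end $\mu^\Q(v) \geq \after(n_i^\Q)$; your ``emptied regions'' bookkeeping is equivalent. Your base case, via $\subtreem(n^\Q) = \varnothing$, is fine, as is the acceleration branch --- though the validity-preservation discussion there is not needed for this lemma, only for the second item of Theorem~\ref{thm:correctness}.)

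The genuine gap is sub-claims (a)/(d), which you correctly identify as the technical heart but do not prove, and your framing of (a) would not go through as stated. A lift at $v$ must land at a position $\leq \dest{\mu}{v}$, so if $\dest{\mu}{v} \in \Bset(n_i)$ you are \emph{not} free to ``instead choose a lift that leaves $\Bset(n_i)$''; the loop of line~\ref{lin:while} simply cannot touch such a vertex. What must be shown is that when that loop exits, the set $S$ of vertices still mapped to $\Bset(n_i)$ --- all of which by then have destination inside $\Bset(n_i)$ --- is empty, and the argument is a contradiction on $S$ as a whole rather than vertex by vertex. Since $\Bset(n_i)$ fixes the $\Q$-coordinate at the lazy position $\before(n_i^\Q)$, having destination inside $\Bset(n_i)$ means each $v \in S$ is valid in $\mu^\Q$ and satisfies $\dest{\mu^\P}{v} < \after(n^\P)$; validity at a lazy position places $v$ in the $\Q$-attractor to $(\mu^\Q)^{-1}(< \before(n_i^\Q))$ through $(\mu^\Q)^{-1}(\leq \before(n_i^\Q))$, so following that attractor some $v \in S$ has a one-step $\Q$-strategy into $(\mu^\Q)^{-1}(\leq \before(n_i^\Q)) \setminus S$, and any target $u$ there has $\mu^\P(u) \geq \after(n^\P)$ by the running invariant. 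Proposition~\ref{prop:one-step-strat}, applied to the $\P$-labelling, then yields $\dest{\mu^\P}{v} \geq \after(n^\P)$, a contradiction. This interplay --- validity for one player at a lazy position forcing an attractor whose boundary certifies a large destination for the other player --- is the missing idea; without it (and without Proposition~\ref{prop:one-step-strat} or an equivalent), the plan does not close, and the same ingredient is needed verbatim for $\Aset(n_k)$ in (d).
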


Theorem~\ref{thm:correctness} follows by applying Lemma~\ref{lem:main-inductive} to the root of $\Tc$. The following is a small technical lemma which is instrumental in the proof of Lemma~\ref{lem:main-inductive}.

\begin{proposition} \label{prop:one-step-strat}
Let $\R \in \{\Even, \Odd\}$, and $\mu^{\R}$ be a $\R$-labelling in $\Tc^\R$. Let $p \in \lazi{\Tc^\R}$, and $v \in V$. Assume that $\bar{\R}$ has a strategy to reach $(\mu^\R)^{-1}(\geq p)$ in one step\footnote{The terminology $\bar \R$ can reach $T$ in one step from $v$'' is just a more efficient way of stating either $v$ belongs to $\bar \R$ and has an edge towards $T$ or $v$ belongs to $\R$ and all of its outgoing edges lead to $T$''.} from $v$, and that $\mu^\R(v) < p$. Then $v$ is invalid in $\mu^\R$ and has destination $\geq p$.
\end{proposition}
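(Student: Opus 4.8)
The statement is essentially a one-step unfolding of the definitions of \emph{validity of an edge}, \emph{validity of a vertex}, and \emph{destination}, so the proof is short and direct. First I would fix the labelling $\mu'$ obtained from $\mu^\R$ by leaving $v$ at its current value $\mu^\R(v) < p$ (i.e.\ $\mu' = \mu^\R$) and examine, for each edge out of $v$, whether it can be valid. The point is that $\bar\R$ can reach $(\mu^\R)^{-1}(\geq p)$ in one step from $v$, so: if $v \in V_{\bar\R}$, then \emph{all} outgoing edges $(v,w)$ of $v$ satisfy $\mu^\R(w) \geq p$; if $v \in V_{\R}$, then \emph{there exists} an outgoing edge $(v,w)$ with $\mu^\R(w) \geq p$. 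I would handle these two cases in parallel using the ``one step'' terminology recalled in the footnote.

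\textbf{Key steps.} The core of the argument is to show that no outgoing edge $(v,w)$ witnessing invalidity is ruled out, i.e.\ that such an edge is \emph{invalid} in $\mu^\R$. Run through Definition~\ref{def:validEdge} for $e=(v,w)$ with $\mu^\R(w) \geq p > \mu^\R(v)$:
\begin{itemize}
\item If $\mu^\R(v)$ is regular, validity would require $\mu^\R(w) < \after(\mu^\R(v))$; but $\mu^\R(w) \geq p$, and since $\mu^\R(v) < p$ with $\after(\mu^\R(v))$ being the first position strictly above $\subtree(\mu^\R(v))$, we have $\after(\mu^\R(v)) \leq p \leq \mu^\R(w)$ whenever $p \notin \subtreem(\mu^\R(v))$; in the remaining case $p \in \subtreem(\mu^\R(v))$ one still gets $\mu^\R(w) \geq p$ lies in $\subtree(\mu^\R(v))$, which is fine, so I need to be slightly more careful here — actually the clean way is: the edge from a regular $\mu^\R(v)$ is valid iff $\mu^\R(w) < \after(\mu^\R(v))$, and this is exactly the condition ``$w$ is \emph{not} reachable to $(\mu^\R)^{-1}(\geq \after(\mu^\R(v)))$''. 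So I would instead phrase the hypothesis with $p = \after(\mu^\R(v))$ in mind, or simply note that the destination computation below subsumes this.
\item If $\mu^\R(v)$ is lazy, validity requires $(v,w)$ to be part of an $\R$-reachability strategy to $(\mu^\R)^{-1}(< \mu^\R(v))$ staying in $(\mu^\R)^{-1}(\leq \mu^\R(v))$; in particular $\mu^\R(w) \leq \mu^\R(v) < p \leq \mu^\R(w)$, a contradiction, so no such edge is valid.
\item $\mu^\R(v) = \top$ is excluded since $\mu^\R(v) < p$.
\end{itemize}
Combining with the one-step reachability hypothesis, in the case $v\in V_{\bar\R}$ every outgoing edge leads to $(\mu^\R)^{-1}(\geq p)$ hence is among the invalid ones, so $v$ is invalid; in the case $v\in V_\R$, at least one outgoing edge leads into $(\mu^\R)^{-1}(\geq p)$, but the correct reasoning for an $\R$-vertex is the opposite — $v$ is valid iff it has \emph{some} valid outgoing edge — so I must show that \emph{no} outgoing edge of $v$ is valid when $v \in V_\R$. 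This is where the hypothesis ``$\bar\R$ has a strategy to reach $(\mu^\R)^{-1}(\geq p)$ in one step'' for an $\R$-vertex $v$ must mean ``\emph{all} outgoing edges of $v$ lead to $(\mu^\R)^{-1}(\geq p)$'' (as the footnote states), so again every outgoing edge lands in $(\mu^\R)^{-1}(\geq p)$ and is therefore invalid by the case analysis above. Hence $v$ is invalid in $\mu^\R$.

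\textbf{Destination.} For the second conclusion, I would show $\dest{\mu^\R}{v} \geq p$ by the same token: for any candidate value $r$ with $\mu^\R(v) \leq r < p$, repeat the edge-validity analysis with $\mu^\R(v)$ replaced by $r$; since every outgoing edge $(v,w)$ has $\mu^\R(w) \geq p > r$, the edge is invalid at $r$ (regular case: $\mu^\R(w)\geq p$ forces $\mu^\R(w) \not< \after(r)$ because $\after(r) \leq p$ when $r < p$ lies strictly below; lazy case: $\mu^\R(w) > r$ contradicts staying $\leq r$). Thus $v$ remains invalid for every $r < p$, so the smallest valid value is $\geq p$, i.e.\ $\dest{\mu^\R}{v} \geq p$.

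\textbf{Main obstacle.} The only delicate point is the regular case of the edge-validity analysis: I need that $r < p$ implies $\after(r) \leq p$, which fails in general (if $p$ lies strictly inside $\subtree(r)$). The resolution is that the hypothesis is only ever applied with $p$ a position that is \emph{not} strictly inside any relevant subtree — in the calls to this proposition in the proof of Lemma~\ref{lem:main-inductive}, $p$ will be of the form $\after(n^\R)$ or a node/position sitting at the boundary of a subtree. So I would either (i) add the mild hypothesis that $p$ is not strictly below $\subtreem(r)$ for any $r<p$ with the same level constraints, or, cleaner, (ii) observe that by the level conditions of a labelling the vertices of $(\mu^\R)^{-1}(\geq p)$ that $v$ reaches have priority equal to $\level$ of their image, and use this to directly rule out the regular edge being valid. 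I expect option (ii) or a careful restatement matching the actual use-sites to be the crux; everything else is a routine unfolding of Definitions~\ref{def:validEdge} and of destination.
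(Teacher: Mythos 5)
Your proof follows exactly the same route as the paper's: a direct case analysis on Definition~\ref{def:validEdge} (regular position, lazy position, $\top$), applied first to $\mu^\R(v)$ to get invalidity and then to every candidate value $r<p$ to bound the destination; your treatment of the lazy and $\top$ cases, and of the two player-cases of vertex validity, is correct. The obstacle you single out in the regular case is, moreover, a genuine one and not an artefact of your write-up: the paper's entire proof is the assertion that any edge from $v$ into $\invim{\mu^\R}{\geq p}$ is invalid, and this assertion fails in precisely the configuration you describe, namely when $\mu^\R(v)$ (or the candidate value $r$) is a regular inner node $m$ with $p \in \subtreem(m)$, since an edge to a vertex $w$ with $p \leq \mu^\R(w) < \after(m)$ is then valid by the first clause of Definition~\ref{def:validEdge}. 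Your resolution (i) is the right one: in every invocation of the proposition (in the proofs of Lemma~\ref{lem:main-inductive} and Lemma~\ref{lem:inductive-eq}) the position $p$ is of the form $\after(m)$ for a node $m$ with $\mu^\R(v) \in \subtree(m)$, so that $\after(\mu^\R(v)) \leq p$ whenever $\mu^\R(v)$ is regular and the problematic case cannot arise; the statement should be read, or restated, with this restriction on $p$. Your resolution (ii) via the level constraints of labellings does not close the gap on its own: a vertex of small priority may legitimately be mapped to a lazy position lying strictly inside $\subtree(\mu^\R(v))$ and at or above $p$, which is exactly the configuration that defeats the regular case. Finally, a small slip: for a vertex $v$ of $\bar\R$ the one-step hypothesis only guarantees \emph{one} outgoing edge into $\invim{\mu^\R}{\geq p}$, not all of them, but since a $\bar\R$-vertex is invalid as soon as one of its outgoing edges is, your conclusion stands unchanged.
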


\begin{proof}
Since $\mu^\R(v) < r$, any edge from $v$ to $\invim{\mu^\R}{\geq r}$ is invalid in $\mu^\R$. An easy case disjunction, whether $v$ belongs to $\R$ or not, concludes that $v$ is invalid in $\mu^\R$. This holds whatever the value of $\mu^\R(v)$, provided it is $< r$, so the destination of $v$ in $\mu^\R$ is necessarily $\geq r$.
\end{proof}

We are now ready to prove Lemma~\ref{lem:main-inductive}. We encourage the reader to refer to Figure~\ref{fig:alg-sym}, which depicts the subsets needed in the proof, in the case where $\P=\Even$.

\begin{proof}
If $n$ is a leaf, then the result is trivial: since $\Gc$ has no vertices of priority 0, $V_\inbis= \varnothing$. Let $n=(n^\P, n^\Q) \in \Tc$ be an inner node, and assume the result of the Lemma known for the children $n_1, \dots, n_k$ of $n$. Assuming there is $\R$ such that all vertices of $V_\inbis$ are valid in $\R$, we may immediately conclude: there is an acceleration, and all vertices $v \in V_\inbis$ satisfy $\mu^{\bar \R}(v) = \after(n^{\bar \R}),$ which directly implies the result. Hence we assume otherwise.

Let us introduce more notations. For $i \in \{1, \dots, k\}$, we let $\mu_{\inbis, i}, \mu_{\rec, i}$ and $\mu_{\out, i}$ respectively denote the labelling at the beginning of the $i$-th iteration of the for-loop, just before the recursive call on line~\ref{lin:recurse} and just after the recursive call. Note that $\mu_{\inbis, 1} = \mu_{\inbis}$, and for $i < k$, $\mu_{\inbis, i} = \mu_{\out, i+1}$. For all $\indexx \in \{\inbis, \rec, \out\} \times \{1, \dots, k\}$, we also let $V_\indexx= \mu_{\indexx}^{-1}(\scope(n))$. Note that by definition, it holds that for all vertices $v \in V_{\inbis, 1}$, we have $\mu_{\inbis, 1}^\Q(v) \geq \before(n_1^\Q)$.

We will now prove the following: for all $i \in \{1, \dots, k\},$ assuming all $v \in V_{\inbis, i}$ satisfy $\mu_{\inbis, i}^\Q(v) \geq \before(n_i^\Q),$ we have $\mu_{\out,i}^\Q(v) \geq \after(n_i^\Q).$ In particular, this proves by induction that vertices $v \in V_{\out, k}$ satisfy $\mu_{\out, k}^\Q \geq \after(n_k^\Q)$. Let $i \in \{1, \dots, k\},$ and assume that all $v \in V_{\inbis, i}$ satisfy $\mu_{\inbis, i}^\Q(v) \geq \before(n_i^\Q).$

To prove that all vertices $v \in V_{\out, i}$ satisfy $\mu_{\out, i}^\Q \geq \after(n_i^\Q)$, it suffices to show that vertices $v \notin V_{\rec, i}$ either satisfy $\mu^{\P}(v) \geq \after(n^\P)$ or $\mu^{\Q}(v) \geq \after(n_i^\Q)$, and conclude by applying the induction hypothesis to the recursive call to \ES$(n_i, \mu_{\rec, i})$. This amounts to showing that $\mu_{\rec, i}^{-1}(\Bset(n_i))=\varnothing.$ Assume for contradiction that $\mu_{\rec, i}^{-1}(\Bset(n_i)) \neq \varnothing$.

Since we have exited the while-loop on line~\ref{lin:while}, it holds that all vertices $v \in \mu_{\rec, i}^{-1}(\Bset(n_i))$ satisfy $\dest{\mu_{\rec, i}}{v} \in \Bset(n_i)$, that is, $\dest{\mu_{\rec, i}^\P}{v} < \after(n^\P)$ and $v$ is valid in $\mu_{\rec,i}^\Q$. This implies that $v$ belongs to the $\Q$-attractor of $(\mu_{\rec,i}^\Q)^{-1}(< \before(n_i^\Q))$ through $(\mu_{\rec,i}^\Q)^{-1}(\leq \before(n_i^\Q))$. In particular, there must be a vertex $v \in \mu_{\rec, i}^{-1}(\Bset(n_i))$ such that $\Q$ has a one-step strategy to reach the set $(\mu_{\rec,i}^\Q)^{-1}(\leq \before(n_i^\Q)) \setminus \mu_{\rec,i}^{-1}(\Bset(n_i))$. Now a vertex $u$ in this set must satisfy $\mu_{\rec,i}^\P(u) \geq \after(n^\P)$, hence Proposition~\ref{prop:one-step-strat} yields $\dest{\mu_{\rec, i}^\P}{v} \geq \after(n^\P),$ a contradiction.

There remains to prove that the while-loop in line~\ref{lin:sndwhile} empties $\Aset(n_k)$, that is, $\mu_\out^{-1}(\scope(n))=\mu_\out^{-1}(\Aset(n_k)) = \varnothing,$ which is achieved using a similar argument. Assume for contradiction that $\mu_\out^{-1}(\Aset(n_k)) \neq \varnothing$. Vertices $v \in \mu_\out^{-1}(\Aset(n_k))$ satisfy $\dest{\mu_\out^\P}{v} < \after(n^\P)$ and are valid in $\mu_\out^\Q$. This implies that they lie in the $\Q$-attractor to $(\mu_\out^\Q)^{-1}(< \after(n_k^\Q))$ through $(\mu_\out^\Q)^{-1}(\leq \after(n_k^\Q))$. Hence, there is $v \in \mu_\out^{-1}(\Aset(n_k)$ such that $\Q$ has a one-step strategy to $(\mu_\out^\Q)^{-1}(\leq \after(n_k^\Q)) \setminus \mu_\out^{-1}(\Aset(n_k))$. Then applying Proposition~\ref{prop:one-step-strat} yields a contradiction.
\end{proof}

\section{Complexity upper bound for Algorithm~\ref{alg:sym-main}} \label{app:complexity}

Let $\delta$ be the maximal degree of a node in $\Tc$ (that is, the maximal degree of a node in $\Tc^\Even$ or in $\Tc^\Odd$) and $N$ be the number of vertices in the game $\Gc$. We will show that Algorithm~\ref{alg:sym-main} performs $O(d \delta \min(|\Tc^\Even|, |\Tc^\Odd|))$. Note that in most instances\footnote{Here, the reader should understand ``In most known constructions of universal trees''.}, it holds that $\delta = O(N)$. If however one would want to implement the algorithm on a tree with high degree, we claim that by maintaining the support of the maps $\mu_\Odd$ and $\mu_\Even$, one may easily skip over subtrees with no vertices mapped to them when accelerating, and recover the $O(d N \min(|\Tc^\Even|, |\Tc^\Odd|))$ upper bound.

Consider an execution of Algorithm~\ref{alg:sym-main}. Given a node $n=(n^\P, n^\Q)$, we use $\mu_n$ to denote the value of the labellings just before a call to \ES~at node $n$, if there is one. We will also use $S_n$ to denote $\mu_n^{-1}(\scope(n))$. For $n \in \Tc$ such that a recursive call is made to \ES~ at $n$, we say that the call is \emph{accelerating} if it holds that for some $\R$, all vertices of $S_n$ are valid in $\mu_n^\R$. Note that in this case an acceleration occurs, hence no call to \ES~is made on any descendant of $n$ in $\Tc$. Otherwise, we say that the call is non-accelerating, in which case a recursive call is performed at each child of $n$.

Let us make two observations.

\medskip  \noindent \textbf{Observation 1.} 

There can be atmost $d\delta$ consecutive accelerating calls.

\medskip 
Indeed, if an accelerating call is made at a node $n$, the very next call is made at the smallest node which is not a descendant of $n$. It is easy to see to bound the size of such a chain of nodes by $d\delta$, $d$ being the total height of $\Tc$.

\medskip  \noindent \textbf{Observation 2.} 

Let $n=(n^\P, n^\Q) \in \Tc$ be a non-accelerating call. Then one of the following holds
\begin{itemize}
\item at least one lift on each labelling is performed in one of the following $\delta$ calls, or
\item one of the following $\delta$ calls is non-accelerating and is at has level $\level(n) -1$.
\end{itemize}

\medskip 

Indeed, if one of the children call at $n_1, \dots, n_k$ is non-accelerating, then the first non-accelerating call $n_i$ satisfies $n_{(i)}=n_i$ and $\level(n_i) = \level(n)-1$. Otherwise, all children call are accelerating, and an update for each player is performed.

\medskip 

A consequence of the second observation is that if $n$ is a non-accelerating call, there is a least one lift in each labelling in the following $d \delta $ calls. Hence, using the first observation, we conclude that every sequence of $2d \delta$ calls has at least one lift in each labelling, which implies the wanted upper bound.

\section{Equivalence of Algorithms~\ref{alg:uad} and Algorithm~\ref{alg:sym-main} equiped with short lifts and resets} \label{app:equivalence}

\subsection{Illustration and pseudo-code}

We first illustrate the variant of Algorithm~\ref{alg:sym-main} with short lifts and resets, then provide a full pseudo-code in Algorithm~\ref{alg:variant}. For completeness and convenience, we also recall the universal attractor decomposition algorithm from~\cite{JM20} (Algorithm~\ref{alg:uad}).

\begin{figure}[h]
  \makebox[\textwidth][c]{\scalebox{.8}{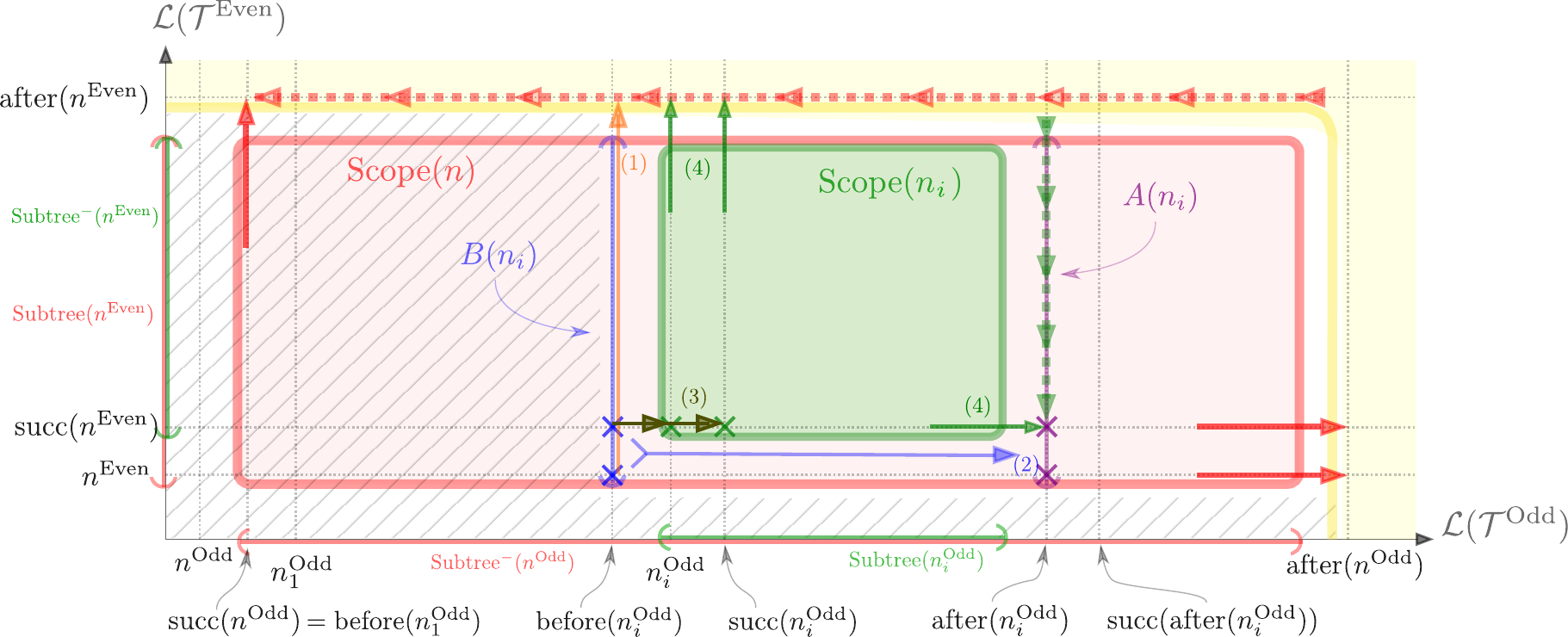}}
  
\caption{Illustration of the $i$-th iteration of the for-loop in a call to \ES~with short lifts and resets at node $n=(n^\P, n^\Q)$, in the case where $\P=\Even$. The reset in the recursive call at $n_i$ is symbolized by the green arrow, while the reset to the call at $n$ is symbolized by the red arrow. At the beginning of the $i$-th iteration of the for-loop, all vertices in $\scope(n)$ are mapped to the positions marked in blue. The orange (1), blue (2), and kaki (3) arrows respectively represent the actions of the while loops of lines~\ref{lin:while1},~\ref{lin:while2},~\ref{lin:while3}. At the beginning of the recursive call at $n_i$, all vertices in $\scope(n_i)$ are mapped to the positions marked in green. The green (4) arrow represents the action of the recursive call at $n_i$ (which includes the green reset). At the end of the $i$-th iteration, all vertices remaining in $\scope(n)$ are mapped to the positions marked in purple. We will argue that the orange (1) and the blue (2) arrows respectively correspond to $\Q$ and $\P$-attractor computations. After the global call at $n$, vertices that were initially in the scope are now mapped to the positions indicated by the red arrows.
}
\label{fig:alg-sym-with-resets}
\end{figure}

\begin{algorithm}[htb!]
\DontPrintSemicolon
  \SetKwFunction{solveE}{$\text{SolveEven}$}
  \SetKwFunction{solveO}{$\text{SolveOdd}$}
  \SetKwProg{fun}{procedure}{:}{}
  \fun{\solveE{$\G,d,n^\Even, n^\Odd$}}{
	 \If{$\G = \varnothing$}{ \Return{$\varnothing$} \;}
	 \Else{
	     $n_1^\Odd, \dots, n_k^\Odd \leftarrow$ children of $n^\Odd$ in $\Tc^\Odd$ \;
	     $\Gc_1 \leftarrow \Gc$ \;
		 \For{$i \leftarrow 1$ \KwTo $k$}{
  	     	$A_i \leftarrow \Attr{\Gc_i}{\Even}{\pi^{-1}(d) \cap \Gc_i}$ \;
			$\Gc'_i \leftarrow \G_i \setminus A$ \;
			$U_i \leftarrow$ \solveO{$\Gc'_i, d-1, n^\Even, n_i^\Odd$} \; \label{lin:recursive-call}
			$B_i \leftarrow \Attr{\Gc_i}{\Odd}{U_i}$ \;
			$\Gc_{i+1} \leftarrow \Gc_i \setminus B_i$ \;
  	 	}
  	 	\Return{$\Gc_{k+1}$} \;
	}
  }  
  \caption{A generic attractor-based algorithm.\label{alg:uad}}
\end{algorithm}

\begin{algorithm}[h]
\DontPrintSemicolon
  \SetKwFunction{emptyScope}{$\text{EmptyScope}$}
  \SetKwProg{fun}{procedure}{:}{}

  \fun{\emptyScope{$n, \mu$}}{
        $n \leftarrow (n^\P, n^\Q)$ \;
  		\If{all vertices in $\mu^{-1}(\scope(n))$ are valid in $\mu^\R$ for some $\R$\label{lin:ifacc}}{
			\ForAll{$v \in \mu^{-1}(\scope(n))$}{
				$\mu^{\bar \R}(v) \leftarrow \after(n^{\bar \R})$ \; 
			}
  		}
		\Else{\label{lin:else}
     		Let $n_1, \dots, n_k \leftarrow$ children of $n$ in $\Tc$ \;
  			\For{$i \leftarrow 1$ \KwTo $k$}{
  				\tcc{Three while loops resolve non-determinism}
				\While{there is $v \in \mu^{-1}(\Bset(n_i))$ such that $\dest{\mu^\P}{v} \geq \after(n^\P)$}{ \label{lin:while1}
					$\mu^P(v) \leftarrow \after(n^\P)$\;
				}
				\While{there is $v \in \mu^{-1}(\Bset(n_i))$ such that $\dest{\mu^\Q}{v} \geq \after(n_i^\Q)$}{ \label{lin:while2}
					$\mu^Q(v) \leftarrow \after(n_i^\Q)$\;
				}
				\While{there is $v \in \mu^{-1}(\Bset(n_i))$}{ \label{lin:while3}
					\If{$v$ has priority $\level(n_i^\Q)$}{
						$\mu^\Q(v) \leftarrow n_i^\Q$	\label{lin:lift1}\;	
					}
					\Else{
						$\mu^\Q(v) \leftarrow \succ(n_i^\Q)$	\label{lin:lift2}\;				
					}
				}
				\emptyScope$(n_i, \mu)$ \label{lin:recurse}\;
				}
			\tcc{Two while loops resolve non-determinism}
			\While{there is $v \in \mu^{-1}(\Aset(n_k))$ such that $\dest{\mu^\P}{v} \geq \after(n^\P)$}{\label{lin:sndwhile1}
				$\mu^\P(v) \leftarrow \after(n^\P)$ \;
			}
			\While{there is $v \in \mu^{-1}(\Aset(n_k))$}{\label{lin:sndwhile2}
				$\mu^\Q(v) \leftarrow \after(n^\Q)$ \label{lin:lift3}\;
			}
			
  		}
  		\tcc{We now perform a reset}
  		\ForAll{$v \in \mu^{-1}(\{\after(n^\P)\} \times_P \subtreem(n^\Q))$}{
  			$\mu^\Q(v) \leftarrow \succ(n^\Q)$ \;
  		}
  			
  	}
  
  \tcc{Main procedure:}
  Let $\mu^\Even,\mu^\Odd \leftarrow $ smallest labellings in $\Tc^\Even, \Tc^\Odd$ respectively \;
  Let $n \leftarrow $ root of $\Tc$ \;
  \emptyScope{$n, (\mu^\Even, \mu^\Odd)$}\;
  \Return{$(\mu^\Even, \mu^\Odd)$} \;
  \caption{The deterministic variant of Algorithm~\ref{alg:sym-main} with short lifts and resets, which is tailored to simulate the generic attractor-based algorithm.\label{alg:variant}}
\end{algorithm}

\newpage

\subsection{Formal proof of equivalence}

As we shall see, there is a case where Algorithm~\ref{alg:variant} is slightly more efficient, which we describe now in the context of \texttt{SolveEven} (there is a similar scenario for \texttt{SolveOdd}). If all vertices of $\Gc$ lie in the $A_1$, the Even attractor to vertices of (even) priority $d$, then it is easy to see that for all $i$, $U_i=\varnothing$, and $\G_i=\G$. Hence, \texttt{SolveEven} returns with $\Gc$ and performs $k$ (empty) recursive calls. However, in this precise case, Algorithm~\ref{alg:uad} immediately undergoes an acceleration, and terminates without performing any recursive calls. This explains the small (artificial) discrepancy $t \leq t' \leq \delta t$ in the statement of the Theorem, which could be improved with a more precise statement accounting for empty recursive calls.

In order to prove Theorem~\ref{thm:eq}, we now introduce a more precise inductive statement.

\begin{lemma}\label{lem:inductive-eq}
Let $\Gc' \subseteq \Gc$ be a subgame with priorities in $\{1, \dots, h\}$, and $n = (n^\P, n^\Q) \in \Tc$ with $\level(n)=h$. Let $\mu_\inbis=(\mu_\inbis^\Even, \mu_\inbis^\Odd)$ be a pair of labellings such that
\begin{itemize}
\item all $v \in \Gc \setminus \Gc'$ are mapped by $\mu_\inbis^\R$ to a position $\geq \after(n^\R)$ for some $\R \in \{\Even, \Odd\}$,
\item all $v \in \Gc$ satisfy $\mu_\inbis^\Q(v)=\succ(n^\Q)$, and
\item all $v \in \Gc$ of priority $h$ are mapped by $\mu_\inbis^\P$ to $n^\P$, and all $v \in \Gc$ of priority $< h$ to $\succ(n^\P)$.
\end{itemize}
Additionaly, we assume that for all $v \in \Gc$, $\dest{\mu_\inbis}{v} \in \scope(n)$, and that if $\Gc' \neq \varnothing$, there is a vertex in $\Gc'$ which is invalid in $\mu_\inbis^\Q$.

Let $\mu_\out$ be the labelling obtained after running \ES$(n,\mu_\inbis)$ \emph{with short lifts and resets, and $t$ be the number of recursive calls performed. Let $D$ be the output of }\texttt{SolveP}$(\Gc', h, n^\Even, n^\Odd)$ \emph{ and $t'$ be the number of recursive calls it performs.} Then
\begin{itemize}
\item $\mu_\out(D) \subseteq \{n^\P, \succ(n^\P)\} \times_\P \{\after(n^\Q)\}$,
\item $\mu_\out(\Gc' \setminus D) \subseteq \{\after(n^\P)\} \times_P \{\succ(n^\Q)\}$, and
\item $t \leq t' \leq (\delta +1) t$.
\end{itemize}
\end{lemma}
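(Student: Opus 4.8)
The plan is to prove Lemma~\ref{lem:inductive-eq} by induction on the level $h = \level(n)$, carefully matching the $i$-th iteration of the for-loop of \texttt{SolveP} (computing $A_i$, $\Gc'_i$, the recursive call $U_i$, $B_i$, and $\Gc_{i+1}$) with the $i$-th iteration of the for-loop in the short-lifts-and-resets variant of \ES. The base case $h$ minimal (equivalently $n$ a leaf, or $\Gc' = \varnothing$) is immediate: \texttt{SolveP} returns $\varnothing$ (resp. $\Gc'$) with no recursive calls, and by the hypothesis $\dest{\mu_\inbis}{v} \in \scope(n)$ together with $\scope(n)$ being essentially empty at the leaf level, \ES$(n,\mu_\inbis)$ does nothing, so both outputs and both counters match trivially. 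The extra accelerating shortcut discussed just before the lemma (all of $\Gc'$ inside $A_1$) is the degenerate case responsible for the slack $t \le t' \le (\delta+1)t$ and must be singled out: when every vertex of $\scope(n)$ is valid in $\mu_\inbis^\Q$, \ES accelerates in a single step with no recursive call, whereas \texttt{SolveP} still makes $k \le \delta$ empty recursive calls; this is exactly where the factor $\delta+1$ enters.

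For the inductive step I would maintain, as a running invariant across the for-loop, a correspondence between the ``geometric'' positions of vertices in the grid (as depicted in Figure~\ref{fig:alg-sym-with-resets}) and the subgame bookkeeping of Algorithm~\ref{alg:uad}. Concretely, at the start of iteration $i$ I would claim: $\mu_{\inbis,i}^{-1}(\scope(n) \text{ restricted to columns} \ge \before(n_i^\Q))$ equals $\Gc_i$, that vertices already removed (in $\Gc \setminus \Gc_i$) sit on the $\after(n^\P)$-border or in already-emptied scopes, and that the labelling restricted to $\Gc_i$ satisfies the three positional hypotheses of the lemma relative to the child node $n_i = (n_i^\Q, n^\P)$ after the three short-lift while-loops (lines~\ref{lin:while1}--\ref{lin:while3}) have run. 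The heart of the argument is identifying: (1) the while-loop of line~\ref{lin:while1} (lifting to $\after(n^\P)$ those $v$ with $\dest{\mu^\P}{v} \ge \after(n^\P)$) with the Even-attractor computation $A_i = \Attr{\Gc_i}{\Even}{\pi^{-1}(d)\cap\Gc_i}$ — here one uses Proposition~\ref{prop:one-step-strat} and the fact that vertices of priority $h$ already sit at $n_i^\P = n^\P$ hence are the ``seeds'' of the attractor; (2) the recursive call \ES$(n_i, \cdot)$ with the recursive call \texttt{SolveQ}$(\Gc'_i, h-1, \dots)$ via the induction hypothesis, so that $U_i$ corresponds to $\mu_{\out,i}^\Q{}^{-1}(\top^\Q\text{-side})$, i.e. the vertices the induction hypothesis places in $\{n_i^\Q, \succ(n_i^\Q)\}\times_\Q\{\after(n^\P)\}$ — wait, more precisely to the set $D$ returned by the recursive call; and (3) the final reset inside the recursive call together with the ensuing Odd-attractor absorption with $B_i = \Attr{\Gc_i}{\Odd}{U_i}$. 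Then $\Gc_{i+1} = \Gc_i \setminus B_i$ is exactly the set of vertices still mapped into the not-yet-processed part of $\scope(n)$, re-establishing the invariant for iteration $i+1$.

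After the for-loop, the two while-loops of lines~\ref{lin:sndwhile1}--\ref{lin:sndwhile2} empty $\Aset(n_k)$: the line~\ref{lin:sndwhile1} loop pushes to $\after(n^\P)$ everything with a large $\P$-destination, and line~\ref{lin:sndwhile2} pushes the remainder up to $\after(n^\Q)$; I would argue, again via Proposition~\ref{prop:one-step-strat} and Lemma~\ref{lem:main-inductive}, that the residual set is precisely $\Gc_{k+1} = D$, which \texttt{SolveP} returns. Finally the reset at the end of \ES reverts the $\mu^\Q$-component of vertices on the $\after(n^\P)$-border from $\subtreem(n^\Q)$ down to $\succ(n^\Q)$, producing exactly the two claimed inclusions $\mu_\out(D) \subseteq \{n^\P,\succ(n^\P)\}\times_\P\{\after(n^\Q)\}$ and $\mu_\out(\Gc'\setminus D)\subseteq\{\after(n^\P)\}\times_\P\{\succ(n^\Q)\}$. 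For the counter bound, summing the inductive inequalities $t_i \le t'_i \le (\delta+1)t_i$ over the $k \le \delta$ children and adding $1$ for the call at $n$ itself gives $t = 1 + \sum_i t_i \le 1 + \sum_i t'_i = t'$ for the lower bound, and $t' = k + \sum_i t'_i \le \delta + (\delta+1)\sum_i t_i \le (\delta+1)(1 + \sum_i t_i) = (\delta+1)t$ for the upper bound; the accelerating degenerate case is checked separately as noted.

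The main obstacle I expect is step (3) above: showing rigorously that after the recursive call returns and the reset inside it has fired, the Odd-attractor $B_i = \Attr{\Gc_i}{\Odd}{U_i}$ is realized by exactly one round of short-lifts (the correspondence between ``one-step attractor absorption'' and ``a single lift to the $\after(n^\P)$ column''), and that no vertex of $\Gc_{i+1}$ is accidentally dragged along. This requires carefully tracking which vertices lie on the $\{\after(n^\P)\}$-border versus strictly inside $\scope(n)$ after the recursion, and invoking the validity/destination monotonicity machinery (Lemma~\ref{lem:monotonicity-of-destination}, Proposition~\ref{prop:small-technical}) to guarantee that the short-lift prioritization does not ``over-lift''. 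A secondary subtlety is verifying that the hypothesis ``there is a vertex of $\Gc'$ invalid in $\mu_\inbis^\Q$'' is correctly propagated (or correctly fails, triggering the accelerating shortcut) at each recursive call, so that the case analysis in the induction is exhaustive.
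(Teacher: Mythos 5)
Your overall architecture coincides with the paper's: induction on the level $h$, an iteration-by-iteration invariant identifying the sets $\Gc_i$, $A_i$, $\Gc'_i$, $U_i$, $B_i$ of Algorithm~\ref{alg:uad} with inverse images of grid positions, the accelerating case isolated as the sole source of the $(\delta+1)$ slack, and Proposition~\ref{prop:one-step-strat} as the workhorse for both inclusions of each attractor identity. However, your dictionary between the while-loops and the attractor computations is inverted, and this is not cosmetic. The while-loop of line~\ref{lin:while1} lifts vertices \emph{in the $\P$-labelling} to $\after(n^\P)$, which marks them as lost for $\P$ at this level; it therefore realizes the \emph{$\Q$-attractor} $B_{i-1}=\Attr{\Gc_{i-1}}{\Q}{U_{i-1}}$ carried over from the previous iteration (the paper's invariant $(5^{i-1})$ reads $B_{i-1}$ off from $(\mu_{i,1}^{\P})^{-1}(\after(n^\P))$). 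The $\P$-attractor $A_i$ to the priority-$h$ vertices is instead realized by the while-loop of line~\ref{lin:while2}, which lifts in the \emph{opponent's} labelling $\mu^\Q$ to $\after(n_i^\Q)$: a vertex from which $\P$ forces a visit to priority $h$ cannot be validly placed inside $\subtree(n_i^\Q)$, so its $\Q$-destination jumps to $\Aset(n_i)$; the priority-$h$ seeds are forced there outright because $\level(n_i^\Q)=h-1$. The principle you are missing is that a player's attractor manifests as forced lifts in the \emph{other} player's labelling. With your identification, the set lifted at line~\ref{lin:while1} cannot be proved equal to $A_i$ (it is not even a $\P$-attractor), and your step (3) --- locating $B_i$ immediately after the recursive call rather than in the line~\ref{lin:while1} loop of iteration $i+1$ (or line~\ref{lin:sndwhile1} when $i=k$) --- unravels with it.

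A second, related slip concerns the accelerating case: you write that \ES{} accelerates when every vertex of the scope is valid in $\mu_\inbis^\Q$, but the lemma's hypothesis guarantees an \emph{invalid} vertex in $\mu_\inbis^\Q$ whenever $\Gc'\neq\varnothing$, so acceleration can only be triggered by validity in $\mu_\inbis^\P$; the content of that case is then to show $\Attr{\Gc'}{\P}{\pi^{-1}(h)\cap\Gc'}=\Gc'$, whence all $U_i=\varnothing$ and \texttt{SolveP} returns $\Gc'$ after $k\leq\delta$ empty recursive calls, which is exactly where $t=1$ meets $t'=k+1\leq\delta+1$. Your treatment of the base case, the final two while-loops, the reset, and the recursion-counting is otherwise consistent with the paper's proof, but the inverted attractor dictionary must be fixed before the invariant can be carried through.
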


It is easy to see that Theorem~\ref{thm:eq} follows from Lemma~\ref{lem:inductive-eq} by applying it to the root of $\Tc$ and upon initialization of $\mu$.

\begin{proof}
This is proved by induction over the level $h$ of $n$. If $h=0$, necessarily $\Gc=\varnothing$, in which case $t=t'=1$, and $\mu_\out = \mu_\inbis$ satisfies the (vacuous) conclusions of the Lemma. We now assume that $h>1$ and that the result is known for nodes of smaller level.

We first rule out possible cases of acceleration in the call to \ES. If $\Gc' = \varnothing$, we have seen that there is nothing to prove. Assume $\Gc' \neq \varnothing$ and $t=1$, that is, the call to \ES~is accelerating. Since we have assumed that there is an invalid vertex in $\mu_\inbis^\Q$, it must be that all vertices of $\Gc'$ are valid in $\mu_\inbis^\P$. Let us prove that in this case, we have $\Attr{\Gc'}{\P}{\pi^{-1}(h) \cap \G'} = \G'.$ 

Assume for contradiction that $\Gc' \setminus \Attr{\Gc'}{\P}{\pi^{-1}(h) \cap \G'} \neq \varnothing.$ Note that vertices in this set must have priority $< h$, and hence are mapped to $\succ(n^\P)$ by $\mu_\inbis^\P$. Hence validity in $\mu_\inbis^\P$ of vertices in this set implies that they belong to the $\P$-attractor to $(\mu_\inbis^\P)^{-1}(\leq n^\P)$ through $(\mu_\inbis^\P)^{-1}(\leq \succ(n^\P)).$ Since these cannot be attracted by $\P$ to $\pi^{-1}(h) \cap \G' = \mu_\inbis^{-1}(\{n^\P\} \times_\P \{\succ(n^\Q)\})$, it must be that there is $v \in \Gc' \setminus \Attr{\Gc'}{\P}{\pi^{-1}(h) \cap \G'}$ from which $\Q$ has a one-step strategy to reach $(\mu_\inbis^\P)^{-1}(\leq \succ(n^\P)) \setminus \Gc'$. But such vertices are mapped by $\mu^\Q$ to $\geq \after(n^\Q)$, so Proposition~\ref{prop:one-step-strat} yields a contradiction to $\dest{\mu_\inbis}{v} \in \scope(n)$, which concludes that $\Attr{\Gc'}{\P}{\pi^{-1}(h) \cap \G'} = \G'$. In particular, in the call to \texttt{SolveP}, an easy induction implies that for all $i \in \{1, \dots, k\}$, $A_i=\Gc'$ and $U_i=\varnothing$, so the output $D$ is $\Gc_{k+1}=\Gc'$. Hence in this case $\mu_\out$ satisfies the conclusions of the Lemma, and we have $t=1$ and $t'=k+1 \leq \delta+1$.

Hence we now assume that the call to \ES~is non-accelerating. We need to introduce more notation. We let\footnote{Note that there is no notation clash as $\Gc'$ does not appear in the pseudo-code for Algorithm~\ref{alg:uad}.} $A_i,\Gc_i,\Gc'_i,U_i$ and $B_i$ for $i \in \{1, \dots, k\}$ as well as $D=\Gc_{k+1}$ be the subsets computed by the call to \texttt{SolveP}, in the provided pseudo-code. For all $i \in \{1, \dots, k\}$, we let $\mu_{i,0}, \mu_{i,1}, \mu_{i,2}$ and $\mu_{i,3}$ be the pair of labellings obtained respectively at the beginning of lines~\ref{lin:while1},~\ref{lin:while2},~\ref{lin:while3}, and~\ref{lin:recurse} in pseudo-code for Algorithm~\ref{alg:variant}, and\footnote{This defines $\mu_{k+1,0}$.} $\mu_{i,4}=\mu_{i+1,0}$ be the pairs of labellings obtained after the recursive call at line~\ref{lin:recurse}. Finally, we let $\mu_{k+1,1}$ and $\mu_{k+1,2}$ refer respectively to the labellings obtained after exiting the while-loops of lines~\ref{lin:sndwhile1} and~\ref{lin:sndwhile2}. 

We now prove by induction on $i \in \{1, \dots, k\}$ that
\begin{itemize}
\item $\Gc_i = \mu_{i,1}^{-1}(\scope(n)) = \mu_{i,1}^{-1}(\{n^\P, \succ(n^P)\} \times_P \{\before(n_i^\Q)\})$,
\item $A_i = \mu_{i,2}^{-1}(\Aset(n_i))$,
\item $\Gc'_i = \mu_{i,3}^{-1}(\scope(n_i))$ and $\mu_{i,3}$ satisfies the conditions for applying Lemma~\ref{lem:inductive-eq} at $n_i$ by induction,
\item $U_i = \Gc'_i \cap (\mu_{i,4}^{\P})^{-1}(\after(n^\P))$, and
\item $B_i = (\mu_{i+1,1}^{\P})^{-1}(\after(n^\P)) \cap \mu_{i,1}^{-1}(\scope(n))$.
\end{itemize}
We refer to these items as $(1^i),(2^i),(3^i),(4^i)$ and $(5^i)$ respectively. For the base case $i=1$, we only prove $(1^1)$, and argue that $(2^1),(3^1),(4^1)$ and $(5^1)$ are proved just as in the general case below. Note that the hypotheses of the Lemma over $\mu_\inbis$ imply that
\[
\mu_\inbis^{-1}(\Gc') = \mu_\inbis^{-1}(\scope(n)) = \mu_\inbis^{-1}(\{n^\P, \succ(n^\P)\} \times_\P \{\before(n_1^\Q)\}),
\]
since $\succ(n^\Q)=\before(n_1^\Q)$. Moreover, since vertices $v \in \Gc'$ satisfy $\dest{\mu_\inbis}{v} \in \scope(n)$, we have $\mu_\inbis = \mu_{1,1}$, so $(1^1)$ indeed holds.

We now turn to the inductive case: let $i \geq 2$ and assume all statements of the form $(l^j)$ for $l \in \{1, \dots, 5\}$ and $j \in \{1, \dots, i-1\}$ hold.
\begin{itemize}
\item \emph{Proof of $(1^i)$.} By $(5^{i-1})$, we have $B_{i-1} = (\mu_{i,1}^\P)^{-1}(\after(n^\P)) \cap \mu_{i-1,1}^{-1}(\scope(n))$, so $\G_i=\G_{i-1} \setminus B_{i-1} = \mu_{i-1,1}(\scope(n)) \setminus B_{i-1}$ (using $(1^{i-1})$) is precisely $\mu_{i,1}^{-1}(\scope(n))$ since in the for-loop, vertices can only be drawn out of $\scope(n)$ by being set to $\after(n^\P)$ in $\mu^\P$. The second equality in $(1^i)$ is given by the conclusion of the Lemma when it is applied inductively in $(3^{i-1})$.
\item \emph{Proof of $(2^i)$.} Note that $A_i=\Attr{\G_i}{\P}{\mu_{i,1}^{-1}(\{n^\P, \succ(n^\P)\} \times_\P \{\before(n_i^\Q)\})}$, and $\mu_{i,2}^{-1}(\Aset(n_i))$ is the set of vertices that have undergone an update in the while loop of line~\ref{lin:while2}. Assume for contradiction that $A_i \setminus \mu_{i,2}^{-1}(\Aset(n_i)) \neq \varnothing,$ and note that vertices in this set have destination $< \after(n_i^\Q)$ in $\mu_{i,2}^\Q$. We claim that either there is a vertex of priority $h$ in $A_i \setminus \mu_{i,2}^{-1}(\Aset(n_i))$, or there is a vertex in $A_i \setminus \mu_{i,2}^{-1}(\Aset(n_i))$ from which $\P$ has a one-step strategy to reach $\mu_{2,i}^{-1}(\Aset(n_i))$. In both cases, we reach a contradiction.

Conversely, assume $\mu_{i,2}^{-1}(\Aset(n_i)) \setminus A_i \neq \varnothing$. Let $v$ be the first vertex $\notin A_i$ to undergo a lift in the while-loop of line~\ref{lin:while2}, and let $\mu_{i,2.1}$ be the pair of labellings prior to this lift. By definition of the destination, it must be that $v$ is invalid in the $\Q$-labelling $\mu_{i,2.1}'^\Q$, which maps $v$ to the position $p^\Q \in \lazi{\Tc^\Q}$ preceding $\after(n_i^\Q)$ and is elsewhere identical to $\mu_{i,2.1}^\Q$. Since $v \in \Gc_i \setminus A_i,$ it must be that $\Q$ has a one-step strategy to reach $\Gc_i \setminus A_i$ from $v$ in $\Gc_i$. Note that these vertices are mapped to positions $< p^\Q$ by $\mu'$, except $v$ who is mapped to $p^\Q$. There are two cases.
\begin{itemize}
\item If $\Q$ has a one-step strategy to reach $(\mu_{i,2.1}'^{\Q})^{-1}{\leq p}$ in $\Gc$, then $v$ is valid in $\mu_{i,2.1}'^\Q$, a contradiction. This is clear if $p$ is a lazy position (which holds for $h \geq 3$), and also true in the case where $p = n_i^\Q$, which holds if $h \in \{1,2\}$.
\item Otherwise, $v$ must belong to $\P$, and have a successor $w$ with $\mu_{2.1}'^\Q(w) \geq \after(n_i^\Q)$. If $\mu_{2.1}'^\Q(w)=\after(n_i^\Q),$ then it must be that $w \in A_i$ so $v$ as well. Otherwise, we have $\mu_{2.1}'^\Q(w)=\mu_{\inbis}(w)=\after(n^\Q)$, and $\P$ has a one step strategy reach $w$ from $v$ in $\Gc$, so Proposition~\ref{prop:one-step-strat} gives a contradiction.
\end{itemize}

\item \emph{Proof of $(3^i)$.} We have $\Gc'_i= \Gc_i \setminus A_i$, and know that $A_i= \mu_{i,2}^{-1}(\Aset(n_i))=\mu_{i,3}^{-1}(\Aset(n_i))$, and that $\mu_{i,3}^{-1}(\Bset(n_i)) = \varnothing$. This implies that $\Gc'_i = \mu_{i,3}(\scope(n_i))$. The three items in the hypotheses of Lemma~\ref{lem:inductive-eq} are clearly satisfied. Moreover, if a vertex $v \in \Gc'_i$ were to satisfy $\dest{\mu_{i,3}}{v} \notin \scope(n_i)$, then it would have undergone an update in the previous while loops (of lines~\ref{lin:while1} or~\ref{lin:while2}). Lastly, assuming $\Gc'_i \neq \varnothing,$ there must be some invalid vertex in $\Gc'_i$ in $\mu_{i,3}^\Q$, otherwise all vertices in $\Gc'_i$ would already have been valid for $\Q$ at position $\before(n_i^\Q)$ in $\mu_{i,2}$, which is absurd.

\item \emph{Proof of $(4^i)$.} This is one of the conclusions of the inductive step in $(3^i)$.

\item \emph{Proof of $(5^i)$.} Note that $(\mu_{i+1,1}^{\P})^{-1}(\after(n^\P)) \cap \mu_{i,1}^{-1}(\scope(n)) \supseteq U_i$. It is not hard to reach a contradiction from $B_i \setminus (\mu_{i+1,1}^{\P})^{-1}(\after(n^\P)) \cap \mu_{i,1}^{-1}(\scope(n)) \neq \varnothing$ by applying Proposition~\ref{prop:one-step-strat}, so the left-to-right inclusion holds. The converse is analogous to the proof of $(2^i)$. 
\end{itemize}
Note that the proof of $(1^i)$ even carries on to $i=k+1$, showing that $\Gc_{k+1}=\mu_{k+1,1}^{-1}(\{n^\P, \succ(n^\P) \times_P \after(n_i^\Q)\})$. Hence, we obtain that $\mu_{k+2,2}$ maps vertices of $\Gc_{k+1}$ to $\{n^\P, \succ(n^\P) \times_P \after(n^\Q)\}$, and so does does $\mu_\out$ since these do not undergo a reset. Lastly, $\mu_{k+2,2}$ maps vertices of $\Gc' \setminus D$ to $\{\after(n^\P)\} \times_\P \subtreem(n^\Q)$, so after the reset, it indeed holds that $\mu_\out(\Gc' \setminus D) \subseteq \{\after(n^\P)\} \times_\P \{ \succ(n^\Q)\}$.
\end{proof}

%
%
%
%
%
%
\end{document}